\newcommand{\bra}[1]{\langle{#1}\vert}
\newcommand{\ket}[1]{\vert{#1}\rangle}
\newcommand{\qw}[1][-1]{\ar @{-} [0,#1]}
\newcommand{\cw}[1][-1]{\ar @{=} [0,#1]}
\newcommand{\gate}[1]{*{\xy *+<.6em>{#1};p\save+LU;+RU **\dir{-}\restore\save+RU;+RD **\dir{-}\restore\save+RD;+LD **\dir{-}\restore\POS+LD;+LU **\dir{-}\endxy} \qw}
\newcommand{\cgate}[1]{*{\xy *+<.6em>{#1};p\save+LU;+RU **\dir{-}\restore\save+RU;+RD **\dir{-}\restore\save+RD;+LD **\dir{-}\restore\POS+LD;+LU **\dir{-}\endxy} \cw}
\newcommand{\measureD}[1]{*{\xy*+=+<.5em>{\vphantom{\rule{0em}{.1em}#1}}*\cir{r_l};p\save*!R{#1} \restore\save+UC;+UC-<.5em,0em>*!R{\hphantom{#1}}+L **\dir{-} \restore\save+DC;+DC-<.5em,0em>*!R{\hphantom{#1}}+L **\dir{-} \restore\POS+UC-<.5em,0em>*!R{\hphantom{#1}}+L;+DC-<.5em,0em>*!R{\hphantom{#1}}+L **\dir{-} \endxy} \qw}
\newcommand{\multimeasureD}[2]{*+<1em,.9em>{\hphantom{#2}}\save[0,0].[#1,0];p\save !C *{#2},p+LU+<0em,0em>;+RU+<-.8em,0em> **\dir{-}\restore\save +LD;+LU **\dir{-}\restore\save +LD;+RD-<.8em,0em> **\dir{-} \restore\save +RD+<0em,.8em>;+RU-<0em,.8em> **\dir{-} \restore \POS !UR*!UR{\cir<.9em>{r_d}};!DR*!DR{\cir<.9em>{d_l}}\restore \qw}
\newcommand{\multigate}[2]{*+<1em,.9em>{\hphantom{#2}} \qw \POS[0,0].[#1,0];p !C *{#2},p \save+LU;+RU **\dir{-}\restore\save+RU;+RD **\dir{-}\restore\save+RD;+LD **\dir{-}\restore\save+LD;+LU **\dir{-}\restore}
\newcommand{\cmultigate}[2]{*+<1em,.9em>{\hphantom{#2}} \cw \POS[0,0].[#1,0];p !C *{#2},p \save+LU;+RU **\dir{-}\restore\save+RU;+RD **\dir{-}\restore\save+RD;+LD **\dir{-}\restore\save+LD;+LU **\dir{-}\restore}
\newcommand{\ghost}[1]{*+<1em,.9em>{\hphantom{#1}} \qw}
\newcommand{\cghost}[1]{*+<1em,.9em>{\hphantom{#1}} \cw}
\newcommand{\gategroup}[6]{\POS"#1,#2"."#3,#2"."#1,#4"."#3,#4"!C*+<#5>\frm{#6}}
\newcommand{\rstick}[1]{*!L!<-.5em,0em>=<0em>{#1}}
\newcommand{\lstick}[1]{*!R!<.5em,0em>=<0em>{#1}}
\newcommand{\ustick}[1]{*!D!<0em,-.5em>=<0em>{#1}}
\newcommand{\Qcircuit}[1][0em]{\xymatrix @*=<#1>}
\newcommand{\pureghost}[1]{*+<1em,.9em>{\hphantom{#1}}}
\newcommand{\multiprepareC}[2]{*+<1em,.9em>{\hphantom{#2}}\save[0,0].[#1,0];p\save !C
  *{#2},p+RU+<0em,0em>;+LU+<+.8em,0em> **\dir{-}\restore\save +RD;+RU **\dir{-}\restore\save
  +RD;+LD+<.8em,0em> **\dir{-} \restore\save +LD+<0em,.8em>;+LU-<0em,.8em> **\dir{-} \restore \POS
  !UL*!UL{\cir<.9em>{u_r}};!DL*!DL{\cir<.9em>{l_u}}\restore}
\newcommand{\prepareC}[1]{*{\xy*+=+<.5em>{\vphantom{#1\rule{0em}{.1em}}}*\cir{l^r};p\save*!L{#1} \restore\save+UC;+UC+<.5em,0em>*!L{\hphantom{#1}}+R **\dir{-} \restore\save+DC;+DC+<.5em,0em>*!L{\hphantom{#1}}+R **\dir{-} \restore\POS+UC+<.5em,0em>*!L{\hphantom{#1}}+R;+DC+<.5em,0em>*!L{\hphantom{#1}}+R **\dir{-} \endxy}}
\newcommand{\Tr}{\operatorname{Tr}}
\newcommand{\lin}[1]{\mathcal{L}(#1)}
\newcommand{\hilb}[1]{\mathcal{#1}}
\newcommand{\set}[1]{\mathcal{#1}}
\newcommand{\map}[1]{\mathcal{#1}}
\newcommand{\erf}{\operatorname{erf}}
\newcommand{\arcsinh}{\operatorname{arcsinh}}
\newtheorem{dfn}{Definition}
\newtheorem{lmm}{Lemma}
\newtheorem{thm}{Proposition}
\def \qcwidth {0.7em}
\def \qcheight {0.6em}
\begin{document}

\title{Gaussian Quantum Reading beyond the Standard Quantum Limit}

\author{Michele \surname{Dall'Arno}}

\affiliation{Graduate School of Information Science, Nagoya
  University, Chikusa-ku, Nagoya, 464-8601, Japan}

\date{\today}

\begin{abstract}
  Quantum reading aims at retrieving classical information stored in
  an optical memory with low energy and high accuracy by exploiting
  the inherently quantum properties of light. We provide an optimal
  Gaussian strategy for quantum reading with phase-shift keying
  encoding that makes use of squeezed coherent light and homodyne
  detectors to largely outperform the Standard Quantum Limit, even in
  the presence of loss. This strategy, being feasible with current
  quantum optical technology, represents a viable prototype for a
  highly efficient and reliable quantum-enhanced optical reader.
\end{abstract}

\maketitle

\section{Introduction}

Digital optical memories store classical information in the optical
properties of a media. They are ubiquitous in modern applications of
classical information processing, such as CDs or DVDs. To retrieve
information, optical readers illuminate each memory cell with a light
probe and measure the outgoing signal. Clearly a tradeoff exists
between the energy of the source and the information it is able to
extract. On the one hand, the need for miniaturized and embedded
technology push toward the use of low energetic light sources. On the
other hand, reliability is an unavoidable requirement for modern
information processing technology.

In the general framework of quantum theory, inherently quantum
properties of light can be exploited to maximize the amount of
retrieved information per fixed energy of the light probe. This
problem, usually referred to as quantum reading, was recently
introduced by Pirandola~\cite{Pir11} and immediately triggered a
noticeable
interest~\cite{BDD11,Nai11,PLGMB11,Hir11,DBDMJD12,SLMBP12,PUR12,NYGSP12}.
However, most of the experimental as well as commercial
implementations of optical readers make use of a suboptimal strategy
exploiting coherent light probes generated by a laser beam. The
maximal amount of information per fixed energy that can be retrieved
using a coherent probe defines the Standard Quantum
Limit~\cite{GLM04,Oza89,Oza88} in quantum reading. Since in general
Gaussian states and measurements are experimentally feasible with
current optical technology, the question naturally arises: can
Gaussian quantum reading outperform the Standard Quantum Limit, thus
providing highly efficient while practically feasible quantum reading
techniques?

We need to be more specific before answering this question. In
applications, the most common ways to encode classical information are
in the phase or in the amplitude of the signal. The two methods are
usually referred to as phase-shift keying (PSK) and amplitude-shift
keying (ASK), respectively. Due to its simplicity, PSK encoding has
been widely adopted in several ISO and IEEE standards, such as
wireless LAN (wifi), several credit cards, Bluetooth, and satellite
communications. Despite its widespread use, the problem of quantum
reading was mainly addressed in the context of APK
encoding~\cite{Pir11,Nai11,PLGMB11,SLMBP12,PUR12}, while only recently
the interest drew by quantum reading with PSK encoding increased.

In Ref.~\cite{Hir11} the problem of quantum reading of two signals
with phase difference $\pi$ was addressed in the lossless scenario. It
was shown that perfect quantum reading can be achieved by a particular
class of entangled coherent states~\cite{San12}. Very recently, in
Ref.~\cite{NYGSP12} the scenario was generalized to the case of $M$
signals with symmetrically distributed phases - namely, the relative
phase of signal $i$ with respect to a given seed state is $2\pi
i/M$. In particular, it was shown that squeezed coherent states
outperform coherent ones in quantum reading of two signals with phase
difference of $\pi$ in the lossless scenario and without any
constraint on the measurement applied. While of the utmost importance
these results leave the aforementioned question open: can Gaussian
quantum reading - namely, quantum reading with Gaussian probes and
Gaussian measurements~\cite{WPGCRSL12} - outperform the Standard
Quantum Limit in the general lossy scenario?

The aim of this work is to affirmatively answer this question in the
context of PSK encoding. We provide an optimal Gaussian strategy
exploiting squeezed coherent light and homodyne detection to perform
quantum reading for any value of the phase difference between the two
signals and in the presence of loss. A comparison of the optimal
Gaussian strategy with the Standard Quantum Limit shows that the
former largely outperforms the latter, even in the presence of loss
and taking into account present technological limitations in the
preparation of highly squeezed states. The proposed optimal strategy
is suitable for implementation with current quantum optical
technology, thus representing a viable prototype for a highly
efficient and reliable quantum-enhanced optical reader.

The paper is structured as follows. In Sect.~\ref{sect:qread} we
introduce the problem of quantum reading and simplify it under the
assumption that no encoding is done in the amplitude of the signal
(e.g. PSK encoding). In Sect.~\ref{sect:gaussread} we provide the
Standard Quantum Limit and an optimal Gaussian strategy for quantum
reading with PSK encoding. In Sect.~\ref{sect:comp} we compare the
optimal Gaussian strategy with the Standard Quantum Limit and
demonstrate experimental feasibility of Gaussian quantum reading
beyond the Standard Quantum Limit. Finally we summarize our results
and propose future developments in Sect.~\ref{sect:concl}.

\section{Quantum reading}
\label{sect:qread}

In this Section we formally introduce the quantum reading of optical
memories as the problem of determining the tradeoff between energy and
probability of error in the discrimination of quantum channels. Let us
first fix the notation~\cite{NC00}.

A $m$-modes {\em quantum optical setup} is represented by an Hilbert
space $\hilb{H} = \bigotimes_{i=1}^m \hilb{H}_i$, where $\hilb{H}_i$
is the Fock space representing mode $i$ and $a_i$ is the corresponding
annihilation operator. In the following we denote with
$\lin{\hilb{H}}$ the space of linear operators on $\hilb{H}$. A {\em
  quantum state} on $\hilb{H}$ is described by a density matrix $\rho
\in \lin{\hilb{H}}$, namely a positive semidefinite operator
satisfying $\Tr[\rho] = 1$, and in the following we pictorially
represent it with $\Qcircuit @C=\qcwidth @R=\qcheight
{\prepareC{\rho}&\qw}$. Up to irrelevant constants its {\em energy} is
given by $E(\rho) := \Tr[N \rho]$, where $N := \sum_{i=1}^m
a_i^\dagger a_i$ is the {\em number operator} on $\hilb{H}$. A {\em
  quantum measurement} on $\hilb{H}$ is described by a POVM $\Pi$,
namely a map associating to any state $\rho \in \lin{\hilb{H}}$ a
probability distribution $p(j|\rho)$ over a set $\{j\}$ of outcomes,
and we represent it with $\Qcircuit @C=\qcwidth @R=\qcheight {&
  \measureD{\Pi}}$. The most general transformation from states to
states is described by a {\em quantum channel} $\map{C} :
\lin{\hilb{H}} \to \lin{\hilb{H}'}$, namely a completely positive and
trace preserving map that we represent with $\Qcircuit @C=\qcwidth
@R=\qcheight {&\gate{\map{C}}&\qw}$.

The most general strategy to discriminate a channel $\map{C}_i$
randomly chosen from a set $\{\map{C}_i : \lin{\hilb{H}} \to
\lin{\hilb{H}}\}_{i=1}^d$ with probability $p_i$ consists in probing
it with a state $\rho \in \lin{\hilb{H}\otimes\hilb{K}}$, where
$\hilb{K}$ is an ancillary space, and measuring the output state with
a $d$-outcomes~\cite{note:povm} POVM $\Pi$, namely
\begin{align}\label{eq:setup1}
  \begin{aligned}
    \Qcircuit @C=\qcwidth @R=\qcheight { \multiprepareC{1}{\rho} &
      \ustick{\hilb{H}} \qw & \gate{\map{C}_i} &
      \multimeasureD{1}{\Pi}\\ \pureghost{\rho} & \ustick{\hilb{K}}
      \qw & \qw & \ghost{\Pi} }
  \end{aligned}.
\end{align}
The probability of error is given by $P_e(\rho,\Pi) := 1-\sum_{i=1}^d
p_i p(i|\map{C}_i(\rho))$. Deriving the tradeoff between
$P_e(\rho,\Pi)$ and $E(\rho)$ is the aim of quantum reading.

\begin{dfn}[Quantum reading]
  \label{def:qread}
  The optimal state $\rho^*$ and the optimal measurement $\Pi^*$ for
  quantum reading of channels $\{\map{C}_i\}$ distributed according to
  probability $\{p_i\}$ are those that minimize the error-probability
  $P_e(\rho,\Pi)$ while satisfying $E(\rho) \le E$ for given energy
  threshold $E$, namely
  \begin{align*}
    (\rho^*,\Pi^*) = \arg\min_{(\rho,\Pi)} P_e(\rho,\Pi) \textrm{
      s.t. } E(\rho) \le E.
  \end{align*}
\end{dfn}

Let us specify the problem more. An $m$-modes {\em quantum optical
  device}~\cite{Leo03} is described by a unitary operator $U :
\hilb{H} \to \hilb{H}'$ relating $d$ input optical modes with
annihilation operators $a_i$'s on Fock space $\hilb{H}_i$ to $d$
output optical modes with annihilation operators $a'_i$'s on Fock
space $\hilb{H}'_i$. An optical device $U$ is called {\em linear} and
{\em passive} if the $a_i$'s are linearly related to the $a'_i$'s,
namely $\bar{a}' = S_U \bar{a}$, where $\bar{a} = (a_1,\dots a_d)$ and
$S_U$ is the {\em scattering matrix} associated to $U$. A passive
device $U$ conserves the energy, namely for any state $\rho$ one has
$E(\rho) = E(U \rho U^\dagger)$. A simple example of quantum optical
device is the $\psi$-{\em{phase shifter}}, namely a single mode device
represented by the unitary $P^\phi = \exp(i \phi a^\dagger
a)$. Another example is the {\em beamsplitter} with transmittivity
$\eta$, namely a $2$-modes device represented by the unitary $B^\eta =
\exp[\theta(a^\dagger b - a b^\dagger)]$, where $\eta =
\cos^2\theta$. For any unitary $U$ we denote with $\map{U}(\rho) := U
\rho U^\dagger$ the corresponding {\em unitary channel}; we will
consider only linear and passive unitary channels.

Loss, affecting any experimental quantum optical implementation, can
be modeled by a {\em lossy channel} $\map{E}^\eta(\rho) :=
\Tr_1[\map{B}^\eta (\rho_0 \otimes \ket{0}\bra{0}_1)]$ with {\em
  quantum efficiency} $\eta$, namely a beamsplitter with signal and
vacuum as inputs, and one output mode of which is traced out, or
equivalently
\begin{align*}
  \begin{aligned}
    \Qcircuit @C=\qcwidth @R=\qcheight { & \ustick{\hilb{H}_0} \qw &
      \gate{\map{E}^{\eta}} & \qw }
  \end{aligned}
  =
  \begin{aligned}
     \Qcircuit @C=\qcwidth @R=\qcheight { & \ustick{\hilb{H}_0} \qw &
       \multigate{1}{\map{B}^\eta} & \qw \\ \prepareC{\ket{0}} &
       \ustick{\hilb{H}_1} \qw & \ghost{\map{B}^\eta} & \measureD{I} }
  \end{aligned}.
\end{align*}
For any bipartite operator $X_{01} \in \lin{\hilb{H}_0 \otimes
  \hilb{H}_1}$, we denote with $\Tr_1[X_{01}]$ the partial trace over
Hilbert space $\hilb{H}_1$. When loss affects $m$ optical modes we
write $\map{E}^{\bar\eta} := \bigotimes_i \map{E}^{\eta_i}$ with
$\bar\eta = (\eta_1,\dots \eta_m)$. A {\em lossy device}
$\map{U}^{\bar\eta}$ is described by the composition of a unitary
channel $\map{U}$ and a lossy channel $\map{E}^{\bar\eta}$, namely
$\map{U}^{\bar\eta} := \map{U} \circ \map{E}^{\bar\eta}$. Analogously,
a {\em lossy source} $\rho^{\bar\eta}$ is described by the preparation
of an ideal state $\rho$ followed by a lossy channel
$\map{E}^{\bar\eta}$, namely $\rho^{\bar\eta} := \map{E}^{\bar\eta}
(\rho)$, and a {\em lossy measurement} $\Pi^{\bar\eta}$ is described
by an ideal one $\Pi$ preceded by a lossy channel
$\map{E}^{\bar\eta}$, namely $\Pi^{\bar\eta} :=
\map{E}^{{\bar\eta}\vee}(\Pi)$, where $\map{C}^\vee$ represents the
application of channel $\map{C}$ in the Heisenberg picture. In
principle lossy channels can be absorbed in the definition of states
and measurements, nevertheless in the following it will be convenient
to keep the contribution of loss in evidence.

When each memory cell can be modeled as a lossy optical device, the
quantum reading strategy given by Eq.~\eqref{eq:setup1} becomes
\begin{align}\label{eq:setup2}
  \begin{aligned}
    \Qcircuit @C=\qcwidth @R=\qcheight { \multiprepareC{1}{\rho} &
      \multigate{1}{\map{E}^{\bar\alpha}} & \ustick{\hilb{H}} \qw &
      \gate{\map{E}^{\bar\beta_i}} & \gate{\map{U}_i} &
      \multigate{1}{\map{E}^{\bar\gamma}} &
      \multimeasureD{1}{\Pi}\\ \pureghost{\rho} &
      \ghost{\map{E}^{\bar\alpha}} & \ustick{\hilb{K}} \qw & \qw & \qw
      & \ghost{\map{E}^{\bar\gamma}} & \ghost{\Pi}
      \gategroup{1}{1}{2}{2}{2mm}{--} \gategroup{1}{4}{1}{5}{2mm}{--}
      \gategroup{1}{6}{2}{7}{2mm}{--}}
  \end{aligned}.
\end{align}
Here and in the following, the dashed box on the left surrounds the
lossy preparation, the one in the middle the lossy unknown device,
while the one on the right the lossy measurement. The setup in
Eq.~\eqref{eq:setup2} can be simplified taking into account the
following rule of composition of lossy channels.

\begin{lmm}[Composition]
  \label{thm:composition}
  Given an Hilbert space $\hilb{H}_0$, the composition of lossy
  channels $\map{E}^{\bar\alpha} : \lin{\hilb{H}_0} \to \lin{\hilb{H}_0}$
  and $\map{E}^{\bar\beta} : \lin{\hilb{H}_0} \to \lin{\hilb{H}_0}$ is a
  lossy channel $\map{E}^{\bar\eta} : \lin{\hilb{H}_0} \to
  \lin{\hilb{H}_0}$ with efficiency $\bar\eta = (\alpha_1\beta_1,
  \alpha_2\beta_2, \dots, \alpha_m\beta_m)$, namely
  \begin{align*}
    \map{E}^{\bar\alpha} \circ \map{E}^{\bar\beta} (\rho) =
    \map{E}^{\bar\eta} (\rho), \qquad \forall \rho \in \lin{\hilb{H}_0}.
  \end{align*}
\end{lmm}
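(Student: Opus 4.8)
The plan is to reduce the multimode statement to a single mode and then verify the single-mode composition rule at the level of characteristic functions, which determine a quantum state uniquely. Since by definition $\map{E}^{\bar\alpha} = \bigotimes_i \map{E}^{\alpha_i}$ and $\map{E}^{\bar\beta} = \bigotimes_i \map{E}^{\beta_i}$, and a tensor product of maps composes factor by factor, $\map{E}^{\bar\alpha}\circ\map{E}^{\bar\beta} = \bigotimes_i (\map{E}^{\alpha_i}\circ\map{E}^{\beta_i})$ independently of any entanglement in the input; moreover each product $\alpha_i\beta_i \in [0,1]$, so the right-hand side $\map{E}^{\bar\eta}$ is a bona fide lossy channel. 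It therefore suffices to prove $\map{E}^{\alpha}\circ\map{E}^{\beta} = \map{E}^{\alpha\beta}$ for a single mode. For any state I introduce the characteristic function $\chi_\rho(\xi) := \Tr[\rho\, D(\xi)]$, with $D(\xi) := \exp(\xi a^\dagger - \xi^* a)$ the displacement operator; since $\chi_\rho$ determines $\rho$, it is enough to show that the two channels act identically on every characteristic function.

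First I would compute the action of a single lossy channel $\map{E}^\eta$ on $\chi_\rho$ directly from its definition $\map{E}^\eta(\rho) = \Tr_1[\map{B}^\eta(\rho\otimes\ket{0}\bra{0}_1)]$. From $B^\eta = \exp[\theta(a^\dagger b - ab^\dagger)]$ with $\eta = \cos^2\theta$, the standard Heisenberg evolution of the mode operator gives the beamsplitter relation $B^\eta{}^\dagger a B^\eta = \sqrt{\eta}\,a + \sqrt{1-\eta}\,b$, whence the displacement operator transforms as $B^\eta{}^\dagger D(\xi) B^\eta = D(\sqrt\eta\,\xi)\otimes D(\sqrt{1-\eta}\,\xi)$, the two-mode displacement factorizing over the signal and ancilla modes. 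Using cyclicity of the trace to move the conjugation onto $D(\xi)\otimes I$, and then the vacuum characteristic function $\bra{0} D(\zeta)\ket{0} = \exp(-|\zeta|^2/2)$ on the traced-out mode, I obtain the key single-channel rule
\begin{align*}
  \chi_{\map{E}^\eta(\rho)}(\xi) = \chi_\rho(\sqrt\eta\,\xi)\,
  \exp\!\left(-\tfrac{1-\eta}{2}|\xi|^2\right).
\end{align*}

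Applying this rule twice then finishes the argument: composing $\map{E}^\beta$ followed by $\map{E}^\alpha$ yields $\chi_{\map{E}^\alpha(\map{E}^\beta(\rho))}(\xi) = \chi_\rho(\sqrt{\alpha\beta}\,\xi)\exp(-\tfrac{1}{2}[\alpha(1-\beta)+(1-\alpha)]|\xi|^2)$, and since the exponent simplifies as $\alpha(1-\beta) + (1-\alpha) = 1-\alpha\beta$, this is exactly $\chi_{\map{E}^{\alpha\beta}(\rho)}(\xi)$. Uniqueness of the characteristic function gives $\map{E}^\alpha\circ\map{E}^\beta = \map{E}^{\alpha\beta}$, completing the single-mode case and hence the lemma. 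I expect the only genuinely delicate step to be the single-channel rule—specifically the beamsplitter conjugation of $D(\xi)$, its clean factorization over the two modes, and keeping the convention $\eta = \cos^2\theta$ consistent—whereas the composition of the two Gaussian exponents and the multimode reduction are routine.
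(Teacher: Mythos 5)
Your proof is correct, but the route you take for the single-mode core is genuinely different from the paper's. The paper also reduces to one mode, but then works entirely at the level of the Stinespring dilation: it proves the beamsplitter rearrangement identity $(\map{B}_{02}^{\beta} \otimes \map{I}_1)(\map{B}_{01}^{\alpha} \otimes \map{I}_2) = (\map{B}_{12}^{\gamma\dagger} \otimes \map{I}_0)(\map{B}_{01}^{\eta} \otimes \map{I}_2)(\map{B}_{12}^{\delta} \otimes \map{I}_0)$ with $\eta=\alpha\beta$, $\gamma=(1-\alpha)/(1-\alpha\beta)$, $\delta=\gamma\beta$, and then observes that the ancilla-ancilla beamsplitter acting first preserves the two-mode vacuum while the one acting last is absorbed by unitary invariance of the partial trace. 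You instead compute the action of a single lossy channel on the characteristic function, $\chi_{\map{E}^\eta(\rho)}(\xi)=\chi_\rho(\sqrt{\eta}\,\xi)\exp(-\tfrac{1-\eta}{2}|\xi|^2)$, and iterate; your conjugation $B^{\eta\dagger}D(\xi)B^{\eta}=D(\sqrt{\eta}\,\xi)\otimes D(\sqrt{1-\eta}\,\xi)$, the vacuum overlap, and the exponent bookkeeping $\alpha(1-\beta)+(1-\alpha)=1-\alpha\beta$ are all right, and the multimode reduction via factorwise composition of tensor-product maps is the same in both arguments. What each buys: the paper's identity is a stronger, circuit-level statement about the dilations themselves (useful for the pictorial calculus it employs throughout), whereas your characteristic-function computation is more self-contained, sidesteps the need to guess the three-beamsplitter factorization, and immediately exhibits $\map{E}^\eta$ as a one-parameter Gaussian semigroup, which is arguably the cleanest way to see why efficiencies multiply. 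One minor point worth flagging in a write-up: the sign convention in $B^{\eta\dagger}aB^{\eta}$ (whether the reflected amplitude carries a minus sign) is immaterial here because only $|\sqrt{1-\eta}\,\xi|$ enters the vacuum expectation, but you should say so rather than leave the convention implicit.
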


\begin{proof}
  Since $\map{E}^{\bar\eta} = \bigotimes_i \map{E}^{\eta_i}$, it is
  sufficient to prove the statement for a single mode, namely we prove
  that $\map{E}^{\alpha} \circ \map{E}^{\beta} (\rho) = \map{E}^{\eta}
  (\rho)$ with $\eta = \alpha\beta$. Denote with $\hilb{H}_1$ and
  $\hilb{H}_2$ the ancillary Fock spaces of channels $\map{E}^\alpha$
  and $\map{E}^\beta$ respectively, namely $\map{E}^{\alpha}(\rho) :=
  \Tr_1[ \map{B}_{01}^{\alpha} (\rho_0 \otimes \ket{0}\bra{0}_1) ]$
  and $\map{E}^{\beta}(\rho) := \Tr_2[ \map{B}_{02}^{\beta} (\rho_0
    \otimes \ket{0}\bra{0}_2) ]$. We want to prove that for any
  $\alpha$ and $\beta$ one has the following equivalence between
  channels:
  \begin{align}\label{eq:eqchannel1}
    \Tr_{12}[(\map{B}_{02}^{\beta} \otimes \map{I}_1)
      (\map{B}_{01}^{\alpha} \otimes \map{I}_2) (\rho_0 \otimes
      \ket{00}\bra{00}_{12})] \\ = \Tr_1[(\map{B}_{01}^{\eta} (\rho_0
      \otimes \ket{0}\bra{0}_1)],\nonumber
  \end{align}
  namely
  \begin{align}\label{eq:eqchannel2}
    \begin{aligned}
      \Qcircuit @C=\qcwidth @R=\qcheight { \prepareC{\ket{0}} &
        \ustick{\hilb{H}_1} \qw & \multigate{1}{\map{B}_{01}^{\alpha}}
        & \qw & \measureD{I} \\ & \ustick{\hilb{H}_0} \qw &
        \ghost{\map{B}_{01}^{\alpha}} &
        \multigate{1}{\map{B}_{02}^{\beta}} & \qw
        \\ \prepareC{\ket{0}} & \ustick{\hilb{H}_2} \qw & \qw &
        \ghost{\map{B}_{02}^{\beta}} & \measureD{I} }
    \end{aligned}
    =
    \begin{aligned}
      \Qcircuit @C=\qcwidth @R=\qcheight { & \ustick{\hilb{H}_0} \qw &
        \multigate{1}{\map{B}_{01}^{\eta}} & \qw \\ \prepareC{\ket{0}}
        & \ustick{\hilb{H}_1} \qw & \ghost{\map{B}_{01}^{\eta}} &
        \measureD{I}}
    \end{aligned}.
  \end{align}

  By direct computation one obtains~\cite{note:decomposition}
  \begin{align}\label{eq:replacement1}
    (\map{B}_{02}^{\beta} \otimes \map{I}_1) (\map{B}_{01}^{\alpha}
    \otimes \map{I}_2) = (\map{B}_{12}^{\gamma\dagger} \otimes
    \map{I}_0) (\map{B}_{01}^{\eta} \otimes \map{I}_2)
    (\map{B}_{12}^{\delta} \otimes \map{I}_0),
  \end{align}
  where $\gamma = (1 - \alpha)/(1 - \alpha\beta)$ and $\delta =
  \gamma\beta$, namely
  \begin{align}\label{eq:replacement2}
    \begin{aligned}
      \Qcircuit @C=\qcwidth @R=\qcheight { & \ustick{\hilb{H}_1} \qw &
        \multigate{1}{\map{B}_{01}^{\alpha}} & \qw & \qw \\ &
        \ustick{\hilb{H}_0} \qw & \ghost{\map{B}_{01}^{\alpha}} &
        \multigate{1}{\map{B}_{02}^{\beta}} & \qw \\ &
        \ustick{\hilb{H}_2} \qw & \qw & \ghost{\map{B}_{02}^{\beta}} &
        \qw }
    \end{aligned}
    =
    \begin{aligned}
      \Qcircuit @C=\qcwidth @R=\qcheight { & \ustick{\hilb{H}_0} \qw &
        \qw & \multigate{1}{\map{B}_{01}^{\eta}} & \qw & \qw \\ &
        \ustick{\hilb{H}_1} \qw &
        \multigate{1}{\map{B}_{12}^{\gamma\dagger}} &
        \ghost{\map{B}_{01}^{\eta}} &
        \multigate{1}{\map{B}_{12}^{\delta}} & \qw \\ &
        \ustick{\hilb{H}_2} \qw & \ghost{\map{B}_{12}^{\gamma\dagger}}
        & \qw & \ghost{\map{B}_{12}^{\delta}} & \qw }
    \end{aligned}
  \end{align}
  Notice that for clarity the ordering of Fock spaces $\hilb{H}_0$ and
  $\hilb{H}_1$ is exchanged in the two sides of
  Eq.~\eqref{eq:replacement2}.

  Upon replacing Eq.~\eqref{eq:replacement1} into
  Eq.~\eqref{eq:eqchannel1} [or equivalently
    Eq.~\eqref{eq:replacement2} into Eq.~\eqref{eq:eqchannel2}] and
  noticing that $\map{B}_{12}^{\gamma\dagger} \ket{00}_{12} =
  \ket{00}_{12}$, one gets the statement by unitarily invariance of
  trace.
\end{proof}

Lemma~\ref{thm:composition} allows to absorb channel
$\map{E}_{\bar\alpha}$ in Eq.~\eqref{eq:setup2} into the definition of
channels $\map{E}_{\bar\beta}$ and $\map{E}_{\bar\gamma}$, so
Eq.~\eqref{eq:setup2} becomes
\begin{align}\label{eq:setup3}
  \begin{aligned}
    \Qcircuit @C=\qcwidth @R=\qcheight { \multiprepareC{1}{\rho} &
      \ustick{\hilb{H}} \qw & \gate{\map{E}^{\bar\beta'_i}} &
      \gate{\map{U}_i} & \multigate{1}{\map{E}^{\bar\gamma'}} &
      \multimeasureD{1}{\Pi}\\ \pureghost{\rho} & \ustick{\hilb{K}}
      \qw & \qw & \qw & \ghost{\map{E}^{\bar\gamma'}} & \ghost{\Pi}
      \gategroup{1}{3}{1}{4}{2mm}{--} \gategroup{1}{5}{2}{6}{2mm}{--}}
  \end{aligned}.
\end{align}

To further simplify the problem we need to be more specific about the
way information is encoded. Information can be stored in (i) quantum
efficiencies $\bar\beta'_i$'s, and (ii) in unitary channels
$\map{U}_i$'s. In the former case (i) logical $d$-it $i$ is encoded
into $\map{U} \circ \map{E}^{\bar\beta'_i}$, while in the latter case
(ii) logical $d$-it $i$ is encoded into $\map{U}_i \circ
\map{E}^{\bar\beta'}$. Clearly, also a combination of the two
encodings is possible. An example of the former case (i) is {\em ASK
  encoding}, where logical $d$-it $i$ is encoded into a single-mode
lossy optical channel $\map{E}^{\eta_i}$. An example of the latter
case (ii) is {\em PSK encoding}, where logical $d$-it $i$ is encoded
into a lossy $\phi_i$-phase shifter $\map{P}^{\phi_i} \circ
\map{E}^{\eta}$. In this work we address the problem of quantum
reading in the presence of loss with encoding of type (ii), and we
have the following Lemma.

\begin{lmm}[Commutation]
  \label{thm:commutation}
  Given an Hilbert space $\hilb{H}$, for any lossy channel
  $\map{E}^{\bar\eta} : \lin{\hilb{H}} \to \lin{\hilb{H}}$ such that
  $\bar\eta = (\eta, \dots \eta)$ is a constant vector and for any
  unitary linear and passive channel $\map{U} : \lin{\hilb{H}} \to
  \lin{\hilb{H}}$, one has that $\map{E}^{\bar\eta}$ commutes with
  $\map{U}$, namely
  \begin{align}\label{eq:eqchannel3}
   \map{U} \circ \map{E}^{\bar\eta} (\rho) = \map{E}^{\bar\eta} \circ
   \map{U} (\rho), \qquad \forall \rho \in \lin{\hilb{H}}.
  \end{align}
\end{lmm}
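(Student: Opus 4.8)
The plan is to work in the Schr\"odinger picture and reduce the statement to a single commutation identity between beamsplitters and the passive unitary. Exactly as in the proof of Lemma~\ref{thm:composition}, I would dilate the loss as $\map{E}^{\bar\eta}(\rho) = \Tr_{\mathrm{anc}}[\map{B}(\rho \otimes \ket{0}\bra{0}_{\mathrm{anc}})]$, where $\map{B}$ is the channel of the unitary $B$ that couples each signal mode $a_i$ to its own vacuum ancilla $v_i$ through a beamsplitter $B^\eta$ of the \emph{common} transmittivity $\eta$, and $\ket{0}_{\mathrm{anc}}$ is the joint vacuum of the $m$ ancillas. The whole argument then consists in moving $\map{B}$ past a copy of $\map{U}$ acting on signal and ancilla together.

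First I would record two elementary facts. (i) Since $\map{U}$ conserves energy and the vacuum is the unique zero-energy state, $U\ket{0}=\ket{0}$ up to an irrelevant phase; hence $\map{U}(\rho)\otimes\ket{0}\bra{0}_{\mathrm{anc}} = (\map{U}\otimes\map{U}_{\mathrm{anc}})(\rho\otimes\ket{0}\bra{0}_{\mathrm{anc}})$, where $\map{U}_{\mathrm{anc}}$ is the same passive channel applied to the ancilla modes. (ii) The partial trace $\Tr_{\mathrm{anc}}$ is invariant under any unitary acting only on the ancillas, so a factor $\map{U}_{\mathrm{anc}}$ sitting immediately inside $\Tr_{\mathrm{anc}}$ can simply be dropped.

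The crux is the identity $\map{B}\circ(\map{U}\otimes\map{U}_{\mathrm{anc}}) = (\map{U}\otimes\map{U}_{\mathrm{anc}})\circ\map{B}$. As all maps are linear and passive, their composition corresponds, up to a global phase that is irrelevant for channels, to the product of the associated scattering matrices, so it suffices to check that these commute. On the doubled mode set $(a_1,\dots,a_m,v_1,\dots,v_m)$ the uniform beamsplitter network has the block form $\left(\begin{smallmatrix} \sqrt{\eta}\,I & \sqrt{1-\eta}\,I\\ -\sqrt{1-\eta}\,I & \sqrt{\eta}\,I \end{smallmatrix}\right)$, with $I$ the $m\times m$ identity, while $U\otimes U$ has the block form $\left(\begin{smallmatrix} S_U & 0\\ 0 & S_U \end{smallmatrix}\right)$; multiplying in both orders yields the same matrix precisely because the diagonal blocks of the beamsplitter network are \emph{equal} scalar multiples of $I$, so they commute with $S_U$. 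This is exactly, and only, where the hypothesis that $\bar\eta=(\eta,\dots,\eta)$ is constant enters.

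Assembling the steps, I would start from $\map{E}^{\bar\eta}\circ\map{U}(\rho) = \Tr_{\mathrm{anc}}[\map{B}((\map{U}\otimes\map{U}_{\mathrm{anc}})(\rho\otimes\ket{0}\bra{0}_{\mathrm{anc}}))]$ using fact (i), push $\map{B}$ through $\map{U}\otimes\map{U}_{\mathrm{anc}}$ via the commutation identity, discard the ancilla copy $\map{U}_{\mathrm{anc}}$ by fact (ii), and pull the surviving signal-mode $\map{U}$ outside the trace, obtaining $\map{U}(\Tr_{\mathrm{anc}}[\map{B}(\rho\otimes\ket{0}\bra{0}_{\mathrm{anc}})]) = \map{U}\circ\map{E}^{\bar\eta}(\rho)$, which is Eq.~\eqref{eq:eqchannel3}. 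The one real obstacle I anticipate is the scattering-matrix commutation, together with making transparent why constancy of $\bar\eta$ is essential rather than incidental: for unequal transmittivities the diagonal blocks become $\operatorname{diag}(\sqrt{\eta_i})$ and fail to commute with a generic $S_U$. As an independent check one may instead compare the characteristic functions of the two sides: a passive $\map{U}$ acts on $\chi_\rho(\lambda)$ by the unitary substitution $\lambda\mapsto V\lambda$, while uniform loss sends $\chi_\rho(\lambda)\mapsto \chi_\rho(\sqrt{\eta}\,\lambda)\exp(-\tfrac{1-\eta}{2}\|\lambda\|^2)$, and the two orders of composition coincide because the scalar $\sqrt{\eta}$ commutes with $V$ while the Gaussian weight depends on $\lambda$ only through the $V$-invariant quantity $\|\lambda\|^2$.
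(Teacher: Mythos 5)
Your proof is correct and follows essentially the same route as the paper: dilate the loss with vacuum ancillas, commute the uniform beamsplitter network past $\map{U}\otimes\map{U}_{\mathrm{anc}}$, and use invariance of the vacuum and of the partial trace under the ancilla copy of $\map{U}$. The only difference is that you spell out, via the block scattering matrices, the commutation identity that the paper dismisses as ``direct computation,'' which usefully makes explicit why the constancy of $\bar\eta$ is essential.
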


\begin{proof}
  Denote with $\hilb{K} = \bigotimes_i \hilb{K}_i$ the ancillary
  Hilbert space in the definition of lossy channel
  $\map{E}^{\bar\eta}$, namely $\map{E}^{\bar\eta}(\rho) =
  \Tr_B[\bigotimes_i \map{B}_i^{\eta} (\rho \otimes \sigma_\hilb{K})]$
  where $\sigma_{\hilb{K}} = \bigotimes_i \ket{0}\bra{0}_i \in
  \lin{\hilb{K}}$ and beamsplitter $\map{B}_i^{\eta}$ acts on Fock
  spaces $\hilb{H}_i$ and $\hilb{K}_i$. Then the statement can be
  reformulated as
  \begin{align}\label{eq:eqchannel4}
    \begin{aligned}
      \Qcircuit @C=\qcwidth @R=\qcheight { & \ustick{\hilb{H}_0} \qw &
        \multigate{1}{\map{B}_0^{\eta}} & \qw & \multigate{3}{\map{U}}
        & \qw \\ \prepareC{\ket{0}} & \ustick{\hilb{K}_0} \qw &
        \ghost{\map{B}_0^{\eta}} & \measureD{I} &
        \pureghost{\map{U}}\\ & & & \lstick{\dots} & \ghost{\map{U}} &
        \qw \\ & \ustick{\hilb{H}_m} \qw &
        \multigate{1}{\map{B}_m^{\eta}} & \qw & \ghost{\map{U}} & \qw
        \\ \prepareC{\ket{0}} & \ustick{\hilb{K}_m} \qw &
        \ghost{\map{B}_m^{\eta}} & \measureD{I} & \pureghost{\map{U}}}
    \end{aligned}
    =
    \begin{aligned}
      \Qcircuit @C=\qcwidth @R=\qcheight { & \ustick{\hilb{H}_0} \qw &
        \multigate{3}{\map{U}} & \qw & \qw &
        \multigate{1}{\map{B}_0^{\eta}} & \qw \\ & &
        \pureghost{\map{U}} & \prepareC{\ket{0}} & \ustick{\hilb{K}_0}
        \qw & \ghost{\map{B}_0^{\eta}} & \measureD{I} \\ & \qw &
        \ghost{\map{U}} & \rstick{\dots} \qw \\ & \ustick{\hilb{H}_m}
        \qw & \ghost{\map{U}} & \qw & \qw &
        \multigate{1}{\map{B}_m^{\eta}} & \qw \\ & &
        \pureghost{\map{U}} & \prepareC{\ket{0}} & \ustick{\hilb{K}_m}
        \qw & \ghost{\map{B}_m^{\eta}} & \measureD{I} }
    \end{aligned}
  \end{align}

  It follows by direct computation that
  \begin{align}\label{eq:replacement3}
    \bigotimes_i \map{B}_i^{\eta} (\map{U} \otimes \map{U}_{\hilb{K}}) =
    (\map{U} \otimes \map{U}_{\hilb{K}}) \bigotimes_i
    \map{B}_i^{\eta},
  \end{align}
  where $\map{U}_{\hilb{K}}$ denotes channel $\map{U}$ acting on
  Hilbert space $\hilb{K}$, or equivalently
  \begin{align}\label{eq:replacement4}
    \begin{aligned}
      \Qcircuit @C=\qcwidth @R=\qcheight { & \ustick{\hilb{H}_m} \qw &
        \multigate{2}{\map{U}} & \qw & \qw &
        \multigate{5}{\map{B}_m^{\eta}} & \qw \\ & \lstick{\dots} &
        \ghost{\map{U}} & \rstick{\dots} \qw \\ & \ustick{\hilb{H}_0}
        \qw & \ghost{\map{U}} & \multigate{1}{\map{B}_0^{\eta}} & \qw
        \\ & \ustick{\hilb{K}_0} \qw &
        \multigate{2}{\map{U}_{\hilb{K}}} & \ghost{\map{B}_0^{\eta}} &
        \qw \\ & \lstick{\dots} & \ghost{\map{U}_{\hilb{K}}} &
        \rstick{\dots} \qw \\ & \ustick{\hilb{K}_m} \qw &
        \ghost{\map{U}_{\hilb{K}}} & \qw & \qw &
        \ghost{\map{B}_m^{\eta}} & \qw }
    \end{aligned}
    =
    \begin{aligned}
      \Qcircuit @C=\qcwidth @R=\qcheight { & \ustick{\hilb{H}_m} \qw &
        \multigate{5}{\map{B}_m^{\eta}} & \qw & \qw & \qw &
        \multigate{2}{\map{U}} & \qw \\ & &
        \pureghost{\map{B}_m^{\eta}} & & & \lstick{\dots} &
        \ghost{\map{U}} & \qw \\ & & \pureghost{\map{B}_m^{\eta}} & &
        \ustick{\hilb{H}_0} \qw & \multigate{1}{\map{B}_0^{\eta}} &
        \ghost{\map{U}} & \qw \\ & & \pureghost{\map{B}_m^{\eta}} & &
        \ustick{\hilb{K}_0} \qw & \ghost{\map{B}_0^{\eta}} &
        \multigate{2}{\map{U}_{\hilb{K}}} & \qw \\ & &
        \pureghost{\map{B}_m^{\eta}} & & & \lstick{\dots} &
        \ghost{\map{U}_{\hilb{K}}} & \qw \\ & \ustick{\hilb{K}_m} \qw
        & \ghost{\map{B}_m^{\eta}} & \qw & \qw & \qw &
        \ghost{\map{U}_{\hilb{K}}} & \qw \\ }
    \end{aligned}
  \end{align}

  Upon replacing Eq.~\eqref{eq:replacement3} into
  Eq.~\eqref{eq:eqchannel3} [or equivalently
    Eq.~\eqref{eq:replacement4} into Eq.~\eqref{eq:eqchannel4}] the
  statement follows, namely
  \begin{align*}
    \map{U} \circ \map{E}^{\bar\eta} (\rho) = & \Tr_{\hilb{K}} [
      (\map{U} \otimes I_{\hilb{K}}) \bigotimes_i \map{B}_i^{\eta}
      (\rho \otimes \sigma_{\hilb{K}}) ] = \\ & \Tr_{\hilb{K}} [
      (\map{U} \otimes \map{U}_{\hilb{K}}) \bigotimes_i
      \map{B}_i^{\eta} (\rho \otimes \sigma_{\hilb{K}}) ] = \\ &
    \Tr_{\hilb{K}} [ \bigotimes_i \map{B}_i^{\eta} (\map{U} \otimes
      \map{U}_{\hilb{K}}) (\rho \otimes \sigma_{\hilb{K}})] = \\ &
    \Tr_{\hilb{K}} [ \bigotimes_i \map{B}_i^{\eta} (\map{U} \otimes
      \map{I}_{\hilb{K}}) (\rho \otimes \sigma_{\hilb{K}}) ] =
    \map{E}^{\bar\eta} \circ \map{U} (\rho),
  \end{align*}
  where second equality holds due to unitarily invariance of trace and
  second-to-last holds since $\sigma_{\hilb{K}} = \bigotimes_i
  \ket{0}\bra{0}_i$.
\end{proof}

Whenever the hypothesis of Lemma~\ref{thm:commutation} is satisfied,
namely losses affecting each mode of unitary channel $\map{U}$ are
equal - this is true for example when modes are implemented by
optical-fiber paths of the same length - applying
Lemma~\ref{thm:commutation} and Lemma~\ref{thm:composition} again the
quantum reading strategy in Eq.~\eqref{eq:setup3} can be rewritten as
\begin{align}\label{eq:setup4}
  \begin{aligned}
    \Qcircuit @C=\qcwidth @R=\qcheight { \multiprepareC{1}{\rho} &
      \ustick{\hilb{H}} \qw & \gate{\map{U}_i} &
      \multigate{1}{\map{E}_{\bar\eta}} &
      \multimeasureD{1}{\Pi}\\ \pureghost{\rho} & \ustick{\hilb{K}}
      \qw & \qw & \ghost{\map{E}_{\bar\eta}} & \ghost{\Pi}
      \gategroup{1}{4}{2}{5}{2mm}{--}}
  \end{aligned}.
\end{align}
We proved that when the information is not encoded in the efficiency
of each memory cell, quantum reading of lossy optical devices can be
recasted to the discrimination~\cite{BDD11,DBDMJD12,Aci01,DLP02,DFY07}
of lossless optical devices with a lossy detector.

\section{Gaussian quantum reading}
\label{sect:gaussread}

In this Section we derive the Standard Quantum Limit - namely, the
optimal strategy using coherent states and homodyne measurements - and
an optimal Gaussian strategy - using squeezed coherent states and
homodyne measurements - for quantum reading with PSK encoding for any
value of the phase difference between signals encoding logical $0$ and
$1$ and in the presence of loss. Let us first fix the notation
\cite{SZ97,GK04,VW06,BR02}.

A {\em coherent state} $\ket{\alpha}$ is obtained by applying the
displacement operator $D(\alpha) := e^{\alpha a^\dagger - \alpha^* a}$
to the vacuum state $\ket{0}$, namely $\ket{\alpha} :=
D(\alpha)\ket{0}$. The energy of a coherent state is given by
$E(\alpha) = |\alpha|^2$. A {\em squeezed coherent state}~\cite{Yue76}
$\ket{\alpha,\xi}$ is obtained by subsequently applying the squeezing
operator $S(\xi) := e^{\frac12(\xi^*a^2 - \xi a^{\dagger2})}$ and the
displacement operator $D({\alpha})$ to the vacuum state $\ket{0}$,
namely $\ket{\alpha,\xi} := D(\alpha)S(\xi)\ket{0}$. The energy of a
squeezed coherent state is given by $E(\alpha,\phi) = |\alpha|^2 +
\sinh^2|\xi|$.

For any state $\rho$ it is useful to introduce its Wigner function
$W_{\rho}(x,p)$ defined as
\begin{align*}
  W_{\rho}(x,p) := \frac{1}{\pi} \int_{-\infty}^{\infty}
  \bra{x-y,\psi} \rho \ket{x+y,\psi} e^{-i2py} dy,
\end{align*}
so the Wigner function of squeezed coherent state
$\ket{e^{i\phi}a,e^{i\theta}r}$ is given by
\begin{align*}
  W_{\ket{\alpha,\xi}}(x,p) = \frac{1}{\pi} e^{-(x'^2+p'^2)}.
\end{align*}
where
\begin{align*}
  \left( \begin{array}{c} x' \\ p' \end{array} \right) =
  \left( \begin{array}{cc} e^{-r} \cos\frac\theta2 & \sin\frac\theta2
    \\ -\sin\frac\theta2 & e^{r} \cos\frac\theta2 \end{array} \right)
  \left( \begin{array}{c}x\\p\end{array} \right) -
    \left( \begin{array}{c} a \cos\phi \\ a \sin\phi \end{array}
    \right),
\end{align*}
and that of coherent state $\ket{e^{i\phi}a}$ can be obtained as a
particular case by setting $r = 0$.

The POVM $\Pi^{\psi}$ describing an {\em homodyne
  measurement}~\cite{BW97,TS04} along quadrature $X(\psi) :=
\frac{1}{\sqrt{2}}(e^{i\psi}a + e^{-i\psi}a^\dagger)$ associates to
state $\rho$ the probability distribution $p(x|\rho,\psi)$ given by
\begin{align*}
  p(x|\rho,\psi) = \int_{-\infty}^{\infty} dp W_{\rho}(x',p'),
\end{align*}
where
\begin{align*}
  \left( \begin{array}{c} x' \\ p' \end{array} \right) =
  \left( \begin{array}{cc} \cos\psi & -\sin\psi \\ \sin\psi &
    \cos\psi \end{array} \right)
  \left( \begin{array}{c}x\\p\end{array} \right).
\end{align*}

In the following we denote with $p(x|\alpha,\xi,\psi)$ the conditional
probability distribution of outcome $x$ of homodyne measurement
$\Pi^{\psi}$ with efficiency $\eta$ given input squeezed coherent
state $\ket{\alpha,\xi}$ with $\alpha = e^{i\phi}a$ and $\xi =
e^{i\theta}r$, given by the Gaussian
\begin{align}\label{eq:pout}
  p(x|\alpha,\xi,\psi) := {\sqrt\frac{1}{\pi\sigma^2(\xi,\psi)}}
  e^{-\frac{(x-x_0(\alpha,\psi))^2}{\sigma^2(\xi,\psi)}},
\end{align}
where
\begin{align*}
  & x_0(\alpha,\psi) := a \cos(\psi-\phi),\\ & \sigma(\xi,\psi) :=
  \sqrt{ e^{-2r}\cos^2(\psi-\frac\theta2) +
    e^{2r}\sin^2(\psi-\frac\theta2) + \frac{1-\eta}{4\eta}},
\end{align*}
while the analogous distribution for coherent state $\ket{e^{i\phi}a}$
can be obtained as a particular case by setting $r = 0$. The term
$\frac{1-\eta}{4\eta}$ in the definition of $\sigma(\xi,\psi)$ is due
to the fact that the conditional probability distribution in the
presence of loss is the convolution of the ideal conditional
probability distribution with a Gaussian with variance
$\frac{1-\eta}{4\eta}$ (see Refs.~\cite{DPS00,PDS01,DMP96,PR04}).

We introduce now the problem of quantum reading with PSK encoding with
a lossy source of squeezed coherent states
$\map{E}^{\alpha}(\ket{\alpha,\xi}\bra{\alpha,\xi})$ and lossy
homodyne measurement $\map{E}^{\gamma\vee}(\Pi^{\psi})$. With the same
notation as in Definition~\ref{def:qread}, we assume that the (binary)
information is encoded into lossy phase shifters $\map{P}^{\delta_i}
\circ \map{E}^{\beta}$ with $i=0,1$, and that no prior information is
provided, namely $p_0=p_1=1/2$. Since homodyne measurement has
infinitely many outcomes, we introduce a classical postprocessing
$\map{J}$ (classical wires being denoted with $\Qcircuit @C=0.7em
@R=1em {& \cw}$) outputting binary outcome $j$, so that $j$ is our
guess for the value $i$ of the bit encoded into the unknown phase
shifter. Then the strategy in Eq.~\eqref{eq:setup2} becomes
\begin{align}\label{eq:setup5}
  \begin{aligned}
    \Qcircuit @C=\qcwidth @R=\qcheight { \prepareC{\ket{\alpha,\xi}} &
      \gate{\map{E}^{\alpha}} & \gate{\map{E}^{\beta}} &
      \gate{\map{P}^{\delta_i}} & \gate{\map{E}^{\gamma}} &
      \measureD{\Pi^\psi} & \cgate{\map{J}} & \cw
      \gategroup{1}{1}{1}{2}{2mm}{--} \gategroup{1}{3}{1}{4}{2mm}{--}
      \gategroup{1}{5}{1}{7}{2mm}{--}}
  \end{aligned}.
\end{align}
Clearly, for fixed probe Gaussian quantum reading reduces to Gaussian
state discrimination
\cite{Dol73,Hel76,TFFIFTS11,OP03,OBH96,MUWTMAL12}.

Due to Lemmas~\ref{thm:composition} and~\ref{thm:commutation} lossy
channels can be absorbed in the definition of lossy POVM, namely the
strategy in Eq.~\eqref{eq:setup5} can be rewritten as
\begin{align}\label{eq:setup6}
  \begin{aligned}
    \Qcircuit @C=\qcwidth @R=\qcheight { \prepareC{\ket{\alpha,\xi}} &
      \gate{\map{P}^{\delta_i}} & \gate{\map{E}^{\eta}} &
      \measureD{\Pi^\psi} & \cgate{\map{J}} & \cw
      \gategroup{1}{3}{1}{5}{2mm}{--}}
  \end{aligned},
\end{align}
with $\eta=\alpha\beta\gamma$. Notice that it is not restrictive to
assume $\delta_0 = 0$, since a phase shifter $\map{P}^{-\delta_0}$ can
be reabsorbed in the definition of input state or homodyne
measurement, finally leading to the discrimination of phase shifters
$\map{P}^{\delta_0} = \map{I}$ and $\map{P}^{\delta_1} =
\map{P}^{\delta}$, where $\delta = -\delta_0\delta_1$, with $\delta
\in [0,\pi]$. Notice that the application of phase shifter
$\map{P}^\delta$ to a squeezed coherent state $\ket{\alpha,\xi}$
gives~\cite{note:psscs} the squeezed coherent state $\ket{
  e^{i\delta}\alpha, e^{i2\delta}\xi}$. Any classical postprocessing
of outcome $x$ can be described by a function $q(j|x)$ that evaluates
to $1$ if one guesses $j$ from outcome $x$ and to $0$ otherwise, so
the probability of error is given by
\begin{align}\label{eq:perr}
  P_e = \frac12 \int dx \; q(1|x) p(x|\alpha,\xi\,\psi) + q(0|x)
  p(x|e^{i\delta}\alpha,e^{i2\delta}\xi,\psi).
\end{align}

In the following, we denote with $\erf(x) := 2/\sqrt{\pi} \int_{0}^x
dt \; e^{-t^2}$ the error function and with $\Omega(x)$ the unit step
function that evaluates to $1$ if $x \ge 0$ and to $0$ otherwise.

The Standard Quantum Limit for quantum reading with PSK encoding in
the ideal case was derived in Ref.~\cite{BDD11}. Here we generalize
that result to the lossy case.

\begin{thm}[Standard Quantum Limit]
  \label{thm:sql}
  For any energy threshold $E$, any efficiency $\eta$ and any phase
  $\delta$, the optimal coherent state $\ket{\alpha^*}$ with $\alpha^*
  = e^{i\phi^*} a^*$ and the optimal homodyne measurement
  $\Pi^{\psi^*}$ with efficiency $\eta$ for Gaussian quantum reading
  of phase shifters $\{\map{I}, \map{P}^\delta\}$ with energy
  $E(\alpha) \le E$ are given by $\phi^* = -\delta/2$, $a^* =
  \sqrt{E}$, and $\psi^* = \pi/2$. The optimal tradeoff is given by
  $P_e = \frac12\left[1+\erf\left( \frac{x_0(\alpha^*,\psi^*)}
    {\sigma(0,\psi^*)}\right)\right]$.
\end{thm}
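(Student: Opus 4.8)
The plan is to reduce the channel-discrimination problem to minimum-error discrimination between two Gaussian distributions that share a common variance, and then to optimize separately over the classical postprocessing and over the state and measurement parameters. First I would write the two hypotheses explicitly: under $\map{I}$ the probe remains $\ket{\alpha}$ with $\alpha=e^{i\phi}a$, while under $\map{P}^\delta$ it becomes $\ket{e^{i\delta}\alpha}$. By Eq.~\eqref{eq:pout} with $r=0$, homodyning along $X(\psi)$ yields two Gaussians whose means are $x_0(\alpha,\psi)=a\cos(\psi-\phi)$ and $x_0(e^{i\delta}\alpha,\psi)=a\cos(\psi-\phi-\delta)$. The crucial observation, which I would isolate as the first step, is that for a coherent state $\cos^2(\psi-\theta/2)+\sin^2(\psi-\theta/2)=1$, so $\sigma(0,\psi)=\sqrt{1+(1-\eta)/(4\eta)}$ is independent of both $\psi$ and the phase; the two hypotheses therefore differ only through their means.

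Next I would optimize the postprocessing $q(j|x)$ in Eq.~\eqref{eq:perr}. Writing $P_e=\tfrac12+\tfrac12\int q(1|x)[p_0(x)-p_1(x)]\,dx$ and using $q(0|x)+q(1|x)=1$, the minimum-error rule assigns $q(1|x)=1$ exactly where $p_1(x)\ge p_0(x)$; for two equal-variance Gaussians this likelihood-ratio test is the threshold at the midpoint of the means. Evaluating the resulting tail integrals in terms of the error function gives $P_e=\tfrac12[1-\erf(|\mu_1-\mu_0|/(2\sigma(0,\psi)))]$, where $\mu_0,\mu_1$ are the two means. Since $\erf$ is monotonically increasing, minimizing $P_e$ is now equivalent to maximizing the mean separation $|\mu_1-\mu_0|$ over $a$, $\phi$, and $\psi$ subject to $a^2\le E$.

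The final step is this trigonometric maximization. A sum-to-product identity gives $|\mu_1-\mu_0|=2a\,|\sin(\psi-\phi-\delta/2)|\,|\sin(\delta/2)|$. The separation grows monotonically in $a$, so the energy constraint is saturated, $a^*=\sqrt{E}$; and it is maximized by $|\sin(\psi-\phi-\delta/2)|=1$, a one-parameter family of optima containing the stated choice $\phi^*=-\delta/2$, $\psi^*=\pi/2$. Substituting the optimal separation $2\sqrt{E}\sin(\delta/2)$ (nonnegative on $\delta\in[0,\pi]$) yields $P_e=\tfrac12[1-\erf(\sqrt{E}\sin(\delta/2)/\sigma(0,\psi^*))]$, which, using $x_0(\alpha^*,\psi^*)=-\sqrt{E}\sin(\delta/2)$ and the oddness of $\erf$, is exactly the stated $\tfrac12[1+\erf(x_0(\alpha^*,\psi^*)/\sigma(0,\psi^*))]$.

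I expect the main obstacle to be establishing that the threshold postprocessing is genuinely optimal jointly with the freedom in $\psi$ and $\phi$. This is precisely where the $\psi$- and $\phi$-independence of the variance does the essential work: it decouples the postprocessing optimization, which then depends only on the mean gap, from the geometric optimization of that gap. Without this decoupling---as will be the case for squeezed probes, where $\sigma$ depends on $\psi$---the two optimizations are entangled and the argument is considerably more delicate. The remaining trigonometric optimization and the Gaussian tail integrals are routine.
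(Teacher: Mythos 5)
Your proposal is correct and follows essentially the same route as the paper's proof: reduce to minimum-error discrimination of two equal-variance Gaussians via the likelihood-ratio (midpoint-threshold) test, observe that $\sigma(0,\psi)$ is independent of all the free parameters, and then maximize the separation of the means over $a$, $\phi$, $\psi$ subject to $a^2\le E$. Your explicit sum-to-product computation of the mean gap and the closed-form $P_e=\tfrac12[1-\erf(|\mu_1-\mu_0|/(2\sigma))]$ merely spell out steps the paper leaves implicit.
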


\begin{proof}
  The minimum of $P_e$ in Eq.~\eqref{eq:perr} is attained when
  \begin{align*}
    q(y|x) = \Omega[(-)^y (p(x|\alpha,0,\psi) -
      p(x|e^{i\delta}\alpha,0,\psi))]
  \end{align*}
  and thus $P_e$ is given by the overlap between the Gaussians
  $p(x|\alpha,0,\psi)$ and $p(x|e^{i\delta}\alpha,0,\psi)$, namely
  \begin{align*}
    P_e = \frac12 \left[1 + \int_{\set{A}} dx \left(
      p(x|e^{i\delta}\alpha,0,\psi) - p(x|\alpha,0,\psi) \right)
      \right],
  \end{align*}
  where $\set{A} := \{x | p(x|\alpha,0,\psi) \ge
  p(x|e^{i\delta}\alpha,0,\psi)\}$.

  From Eq.~\eqref{eq:pout} it follows that $P_e$ depends on the phases
  $\phi$ and $\psi$ only through the sum $\psi-\phi$, so without loss
  of generality we fix $\psi = \psi^*$, namely homodyne measurement is
  performed along quadrature $P$.

  Notice that for any coherent state $\ket{e^{i\phi}a}$ such that
  $E(e^{i\phi}a) \le E$, one has that the state $\ket{e^{i\phi}a^*}$
  is such that $E(e^{i\phi}a^*) = E$ and $P_e(e^{i\phi}a^*) \le
  P_e(e^{i\phi}a)$. Indeed, while $\sigma(0,\psi^*)$ does not depend
  on $a$, one has that $|x_0(e^{i(\delta+\phi)}a^*,\psi^*) -
  x_0(e^{i\phi}a^*,\psi^*)| \ge |x_0(e^{i(\delta+\phi)}a,\psi^*) -
  x_0(e^{i\phi}a,\psi^*)|$. Then the optimal value for parameter $a$
  is $a^*$.
  
  Notice that for any coherent state $\ket{e^{i\phi}a^*}$ such that
  $E(e^{i\phi}a^*) = E$, one has that the state $\ket{e^{i\phi^*}a^*}$
  is such that $E(e^{i\psi^*}a^*) = E$ and $P_e(e^{i\phi^*}a^*) \le
  P(e^{i\phi}a)$. Indeed, while $\sigma(0,\psi^*)$ does not depend on
  $a$, one has that $|x_0(e^{i(\delta+\phi^*)}a^*,\psi^*) -
  x_0(e^{i\phi^*}a^*,\psi^*)| \ge |x_0(e^{i(\delta+\phi)}a^*,\psi^*) -
  x_0(e^{i\phi}a^*,\psi^*)|$. Then the optimal value for phase $\phi$
  is $\phi^*$.
\end{proof}

We can now introduce the main result of this work, namely an optimal
strategy for quantum reading with PSK encoding with squeezed coherent
states and homodyne measurements.

\begin{thm}[Optimal Gaussian quantum reading]\label{thm:gaussopt}
  For any energy threshold $E$, any efficiency $\eta$ and any phase
  $\delta$, the optimal squeezed coherent state $\ket{\alpha^*,\xi^*}$
  with $\alpha^* = e^{i\phi^*} a^*$ and $\xi^* = e^{i\theta^*} r^*$
  and the optimal homodyne measurement $\Pi^{\psi^*}$ with efficiency
  $\eta$ for Gaussian quantum reading of phase shifters $\{I,
  P_\delta\}$ with energy $E(\alpha) \le E$ are given by $\phi^* =
  -\delta/2$, $a^* = \sqrt{E-\sinh^2r^*}$, $\psi^* = \pi/2$. Whenever
  $\theta^* = -\delta -\pi\Omega(\frac{\pi}{2}-\delta)$ one has
  \begin{align*}
    r^* = \frac12 \ln \frac{
      \sqrt{\left(2E+1+\frac{1-\eta}{4\eta}\right)^2-4E(E+1)\cos^2\theta^*}
      + \cos\theta^*} {(\cos\theta^*+1)(2E+1)+\frac{1-\eta}{4\eta}},
  \end{align*}
  and the optimal tradeoff is given by $P_e =
  \frac12\left[1+\erf\left(\frac{x_0(\alpha^*,\psi^*)}
    {\sigma(\xi^*,\psi^*)}\right)\right]$.
\end{thm}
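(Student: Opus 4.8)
The plan is to follow the scheme of the proof of Proposition~\ref{thm:sql}, now exploiting the two extra squeezing parameters $(\theta,r)$. Exactly as in the coherent case, the optimal postprocessing $q(j|x)$ is the likelihood-ratio (threshold) test, so that $P_e$ in Eq.~\eqref{eq:perr} equals the overlap of the two output Gaussians $p(x|\alpha,\xi,\psi)$ and $p(x|e^{i\delta}\alpha,e^{i2\delta}\xi,\psi)$. By Eq.~\eqref{eq:pout} these have means $x_0(\alpha,\psi)$ and $x_0(e^{i\delta}\alpha,\psi)$ and standard deviations $\sigma(\xi,\psi)$ and $\sigma(e^{i2\delta}\xi,\psi)$. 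Since $P_e$ depends on $\phi,\psi$ only through $\psi-\phi$ and on $\theta,\psi$ only through $\psi-\theta/2$, I fix $\psi=\psi^*=\pi/2$ without loss of generality and treat $\phi$ (which sets the two means) and $\theta$ (which sets the two variances) as the remaining angular freedoms.

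The decisive new step is to make the two output variances coincide, which collapses the threshold test to the symmetric situation of Proposition~\ref{thm:sql} and restores the clean $\erf$ form. The phase shift acts as $\theta\mapsto\theta+2\delta$, so I impose $\sigma(\xi,\psi^*)=\sigma(e^{i2\delta}\xi,\psi^*)$; reading $\sigma$ off Eq.~\eqref{eq:pout} this reduces to $\cos 2(\psi^*-\theta/2)=\cos 2(\psi^*-\theta/2-\delta)$, whose nontrivial solutions form the family $\theta=-\delta-k\pi$. Of the two inequivalent branches I keep the one placing the larger of $\cos^2(\psi^*-\theta/2)$ and $\sin^2(\psi^*-\theta/2)$ onto the squeezed quadrature (the $e^{-2r}$ term), which minimizes the common variance; this selection is precisely encoded by the step function, giving $\theta^*=-\delta-\pi\Omega(\pi/2-\delta)$. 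With this choice the common variance takes the compact form $\sigma^2=\cosh 2r+\cos\theta^*\sinh 2r+\frac{1-\eta}{4\eta}$, which is independent of $\phi$.

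Because the equalized variance no longer depends on $\phi$, the means depend on $\phi$ alone, and maximizing their separation $2a|\cos(\phi+\delta/2)|\sin(\delta/2)$ fixes $\phi^*=-\delta/2$, just as in Proposition~\ref{thm:sql}. The error probability is then $P_e=\frac12[1+\erf(x_0(\alpha^*,\psi^*)/\sigma(\xi^*,\psi^*))]$ with $x_0(\alpha^*,\psi^*)=-a\sin(\delta/2)$, so that minimizing $P_e$ is equivalent to maximizing $a^2/\sigma^2$ subject to the energy constraint $a^2+\sinh^2 r=E$. Eliminating $a^2=E-\sinh^2 r$ and imposing $\partial_r(a^2/\sigma^2)=0$, then writing $C=\cosh 2r$, $S=\sinh 2r$ and using $C^2-S^2=1$, I obtain the stationarity relation
\begin{equation*}
  S\left(2E+1+\frac{1-\eta}{4\eta}\right)=\cos\theta^*\left[1-(2E+1)C\right].
\end{equation*}
Squaring turns this into a quadratic in $C$; its physical root (the one with $r>0$) reproduces the stated $r^*$, whence $a^*=\sqrt{E-\sinh^2 r^*}$ and the claimed tradeoff.

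I expect the main obstacle to be justifying that the equal-variance configuration is globally optimal rather than merely convenient: for two Gaussians of unequal variance the optimal decision set is a union of intervals and $P_e$ loses the simple $\erf$ form, so I must rule out that some unbalanced $\theta$ --- trading a narrower distribution under $\map{I}$ for a broader one under $\map{P}^\delta$ --- beats the symmetric choice. I anticipate this follows from the symmetry of the two signal ellipses about the measured quadrature together with the monotonicity of the Gaussian overlap in the measured noise, but it is the step demanding the most care; the ensuing optimization over $r$ is then a routine, if lengthy, calculus exercise.
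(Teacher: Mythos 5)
Your proposal follows essentially the same route as the paper's proof: Bayes-optimal threshold postprocessing, $P_e$ as the overlap of the two output Gaussians, reduction of the phase dependence to $\psi-\phi$ and $\psi-\theta/2$ with $\psi^*=\pi/2$, saturation of the energy constraint via $a(r)=\sqrt{E-\sinh^2 r}$, the choice $\phi^*=-\delta/2$ maximizing the separation of the means, and a one-variable calculus problem in $r$. Your compact variance $\sigma^2=\cosh 2r+\cos\theta^*\sinh 2r+\frac{1-\eta}{4\eta}$ and the stationarity relation $S\left(2E+1+\frac{1-\eta}{4\eta}\right)=\cos\theta^*\left[1-(2E+1)C\right]$ are both correct, and the physical root of the resulting quadratic in $C$ does reproduce the stated $r^*$. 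Your derivation of $\theta^*$ by equalizing the two output variances and picking the branch that minimizes the common variance is sound and matches the remark the paper makes immediately after the proposition.

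The one place you go astray is the status of $\theta^*$, and it matters because you have flagged it as the ``step demanding the most care'' and left it unproven. The clause ``Whenever $\theta^*=-\delta-\pi\Omega(\pi/2-\delta)$'' in the statement is a \emph{hypothesis}, not a conclusion: the proposition only asserts the value of $r^*$ and the $\erf$ tradeoff conditional on that choice of $\theta$, and the paper's proof accordingly just sets $\theta=\theta^*$ and optimizes the remaining parameters. More importantly, the global-optimality argument you anticipate (symmetry of the ellipses plus monotonicity of the overlap) cannot exist in general: the paper reports numerical evidence that $\theta=\theta^*$ is \emph{not} optimal for $\delta\le\pi/2$ at sufficiently low energy, precisely the unbalanced-variance escape route you were worried about. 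So you should drop the attempt to close that ``gap'' --- it is not required by the statement, and pursuing it would lead you to try to prove something false. With that final paragraph removed, your argument establishes everything the proposition actually claims.
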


\begin{proof}
  The minimum of $P_e$ in Eq.~\eqref{eq:perr} is attained when
  \begin{align*}
    q(y|x) = \Omega[(-)^y (p(x|\alpha,\xi,\psi) -
      p(x|e^{i\delta}\alpha,e^{i2\delta}\xi,\psi))]
  \end{align*}
  and thus $P_e$ is given by the overlap between the Gaussians
  $p(x|\alpha,\xi,\psi)$ and
  $p(x|e^{i\delta}\alpha,e^{i2\delta}\xi,\psi)$, namely
  \begin{align}\label{eq:gaussperr}
    P_e = \frac12 \left[1 + \int_{\set{A}} dx \left(
      p(x|e^{i\delta}\alpha,e^{i2\delta}\xi,\psi) -
      p(x|\alpha,\xi,\psi) \right) \right],
  \end{align}
  where $\set{A} := \{x | p(x|\alpha,\xi,\psi) \ge
  p(x|e^{i\delta}\alpha,e^{i2\delta}\xi,\psi)\}$.

  From Eq.~\eqref{eq:pout} it follows that $P_e$ depends on the phases
  $\phi$, $\theta$ and $\psi$ only through the sums $\psi-\phi$ and
  $\psi-\frac\theta2$, so without loss of generality we fix $\psi =
  \psi^*$, namely homodyne measurement is performed along quadrature
  $P$.

  Notice that for any squeezed coherent state $\ket{e^{i\phi}a,\xi}$
  such that $E(e^{i\phi}a,\xi) \le E$, one has that the state
  $\ket{e^{i\phi}a(r),\xi}$ with $a(r) := \sqrt{E-\sinh^2r}$ is such
  that $E(e^{i\phi}a(r),\xi) = E$ and $P_e(e^{i\phi}a(r),\xi) \le
  P_e(e^{i\phi}a,\xi)$. Indeed, while $\sigma(\xi,\psi^*)$ does not
  depend on $a$, one has that $|x_0(e^{i(\delta+\phi)}a(r),\psi^*) -
  x_0(e^{i\phi}a(r),\psi^*)| \ge |x_0(e^{i(\delta+\phi)}a,\psi^*) -
  x_0(e^{i\phi}a,\psi^*)|$. Then constraint $E(\alpha,\xi) \le E$ can
  be recasted without loss of generality into $E(\alpha,\xi) = E$,
  allowing to eliminate parameter $a$ by writing $a = a(r)$.
  
  Notice that for any squeezed coherent state
  $\ket{e^{i\phi}a(r),\xi}$ such that $E(e^{i\phi}a(r),\xi) = E$, one
  has that the state $\ket{e^{i\phi^*}a(r),\xi}$ is such that
  $E(e^{i\phi^*}a(r),\xi) = E$ and $P_e(e^{i\phi^*}a(r),\xi) \le
  P(e^{i\phi}a(r),\xi)$. Indeed, while $\sigma(e^{i\theta}r,\psi^*)$
  does not depend on $\phi$, one has that
  $|x_0(e^{i(\delta+\phi^*)}a(r),\psi^*) -
  x_0(e^{i\phi^*}a(r),\psi^*)| \ge |x_0(e^{i(\delta+\phi)}a(r),\psi^*)
  - x_0(e^{i\phi}a,\psi^*)|$. Then the optimal value for phase $\phi$
  is $\phi^*$.

  When $\theta = \theta^*$, an explicit evaluation of
  Eq. \eqref{eq:gaussperr} leads to $P_e = \frac12\left[1 +
    \erf(\frac{ x_0(e^{i\phi^*}a(r),\psi^*) } {
      \sigma(e^{i\theta^*}r,\psi^*) })\right]$, and the optimal value
  for parameter $r$ can be obtained minimizing $P_e$. Since $\erf(x)$
  is a monotone increasing function in $x$, minimizing $P_e$ is
  equivalent to minimizing $\frac{ x_0(e^{i\phi^*}a(r),\psi^*) } {
    \sigma(e^{i\theta^*}r,\psi^*) }$. It is lengthy but not difficult
  to verify that the equation $\frac{\partial}{\partial r} \frac{
    x_0(e^{i\phi^*}a(r),\psi^*) } { \sigma(e^{i\theta^*}r,\psi^*) } =
  0$ admits the only solution $r = r^*$ and that $r^*$ is a minimum,
  so the statement remains proved.
\end{proof}

Notice that for $\alpha = \alpha^*$ and $\psi = \psi^*$, for any $r$
the only choices of $\theta$ such that $\sigma(\xi,\psi^*) =
\sigma(e^{i2\delta}\xi,\psi^*)$ are $\theta = -\delta$ and $\theta =
-\delta -\pi$, and the choice $\theta = \theta^*$ given by
Prop.~\ref{thm:gaussopt} corresponds to the one minimizing
$\sigma(\xi,\psi^*)$ (see next Section and
Fig.~\ref{fig:strategy}). We obtained numerical evidence that the
choice $\theta = \theta^*$ is optimal whenever (i) $\delta \ge \pi/2$
for any energy threshold $E$ and any efficiency $\eta$, or (ii) when
$\delta \le \pi/2$ for sufficiently high $E$. However, when $\delta
\le \pi/2$ for sufficiently low $E$ the choice $\theta = \theta^*$ is
not optimal anymore, and the second statement in
Prop.~\ref{thm:gaussopt} can not be applied.

When $\delta = \pi$ one has $\theta^* = -\pi$, and thus the expression
for $r^*$ in Prop.~\ref{thm:gaussopt} is not defined for $\eta =
1$. In this case the limit $\eta \to 1$ must be considered, leading to
$r^* = \arcsinh[E/\sqrt{2E+1}]$. Notice that this particular
expression for $r^*$ when $\delta = \pi$ and $\eta = 1$ is optimal
also for hybrid quantum reading with Gaussian probe and arbitrary
measurement (see Ref.~\cite{NYGSP12}).

\section{Optimal Gaussian strategy versus Standard Quantum Limit}
\label{sect:comp}

This Section is devoted to the comparison between the Standard Quantum
Limit and the optimal Gaussian strategy for quantum reading derived in
Sect.~\ref{sect:gaussread}.

Figure~\ref{fig:strategy} provides a phase-space representation of the
Wigner function of the states attaining the Standard Quantum Limit
(light gray circles) and the optimal Gaussian strategy (bold-line
ellipses) as given by Props.~\ref{thm:sql} and \ref{thm:gaussopt},
respectively.
\begin{figure}[htb]
  \includegraphics[width=1\columnwidth]{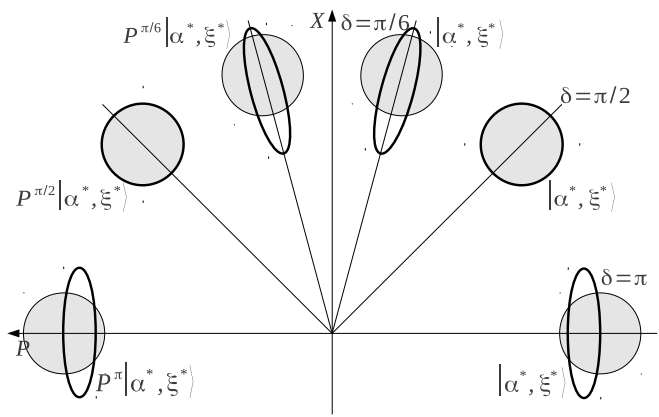}
  \caption{Phase-space representation of the Wigner function of the
    optimal Gaussian state (bold-line ellipses) $\ket{\alpha^*,\xi^*}$
    for quantum reading of a $\delta$-phase shifter for $\delta = \pi,
    \pi/2, \pi/6$, as given by Prop.~\ref{thm:gaussopt}. The optimal
    homodyne measurement is along quadrature $P$ (horizontal in
    Figure). The optimal coherent state (light gray circles) attaining
    the Standard Quantum Limit is also depicted for each value of
    $\delta$. The optimal strategy outperforms the Standard Quantum
    Limit for any value of $\delta$ except $\delta=\pi/2$, where the
    two strategies coincide. The maximal advantage over the Standard
    Quantum Limit is achieved for $\delta=\pi$ and in the regime of
    small $\delta$.}
  \label{fig:strategy}
\end{figure}
The Figure provides an intuitive understanding of the advantage given
by squeezed coherent states over coherent states. On the one hand, for
fixed energy the more two states are squeezed, the more their Wigner
functions get ``closer'' and thus hardly distinguishable. On the other
hand, when squeezing is performed approximately along the quadrature
being measured, the Wigner functions become ``thinner'' as squeezing
increases. These two phenomena are clearly contrasting, but when the
optimal tradeoff is taken, a dramatic improvement in the precision of
the discrimination is experienced, as discussed in the next
paragraphs. The only value of the phase $\delta$ for which squeezed
coherent states do not provide an advantage over coherent states is
$\delta = \pi/2$. The maximum advantage is achieved in the two regimes
$\delta \sim \pi$ and $\delta \sim 0$, since in both cases the Wigner
function of the optimal state $\rho^*$ is squeezed approximately along
the quadrature which is measured - this is rigorously true only for
$\delta = \pi$.

The advantage of using the $\delta = \pi$ encoding is obvious - it is
clearly the choice giving the lower tradeoff between energy and
probability of error (see later discussion and
Fig. \ref{fig:popt}). Different choices, and in particular the regime
$\delta \sim 0$, can be exploited in several applications for tuning
the minimum energy required by a Gaussian reader to retrieve some
information. For example, a read-only-once memory could be implemented
by triggering a device to self erase after being exposed to a
radiation of energy $E$, where $E$ is the minimum energy required to
read the memory with a given probability of error $P_e$, as given by
Prop.~\ref{thm:gaussopt}. Notice that the quantum reading problem in
the regime $\delta \sim 0$ shares analogies with the problem of
channel estimation for low-noise parameter~\cite{HKO05,HKO06}.

In Fig.~\ref{fig:ropt} the optimal squeezing parameter $r^*$ given by
Prop.~\ref{thm:gaussopt} is plotted as a function of $E$ for different
values of $\delta$ and $\eta$.
\begin{figure}[htb]
  \includegraphics[width=1\columnwidth]{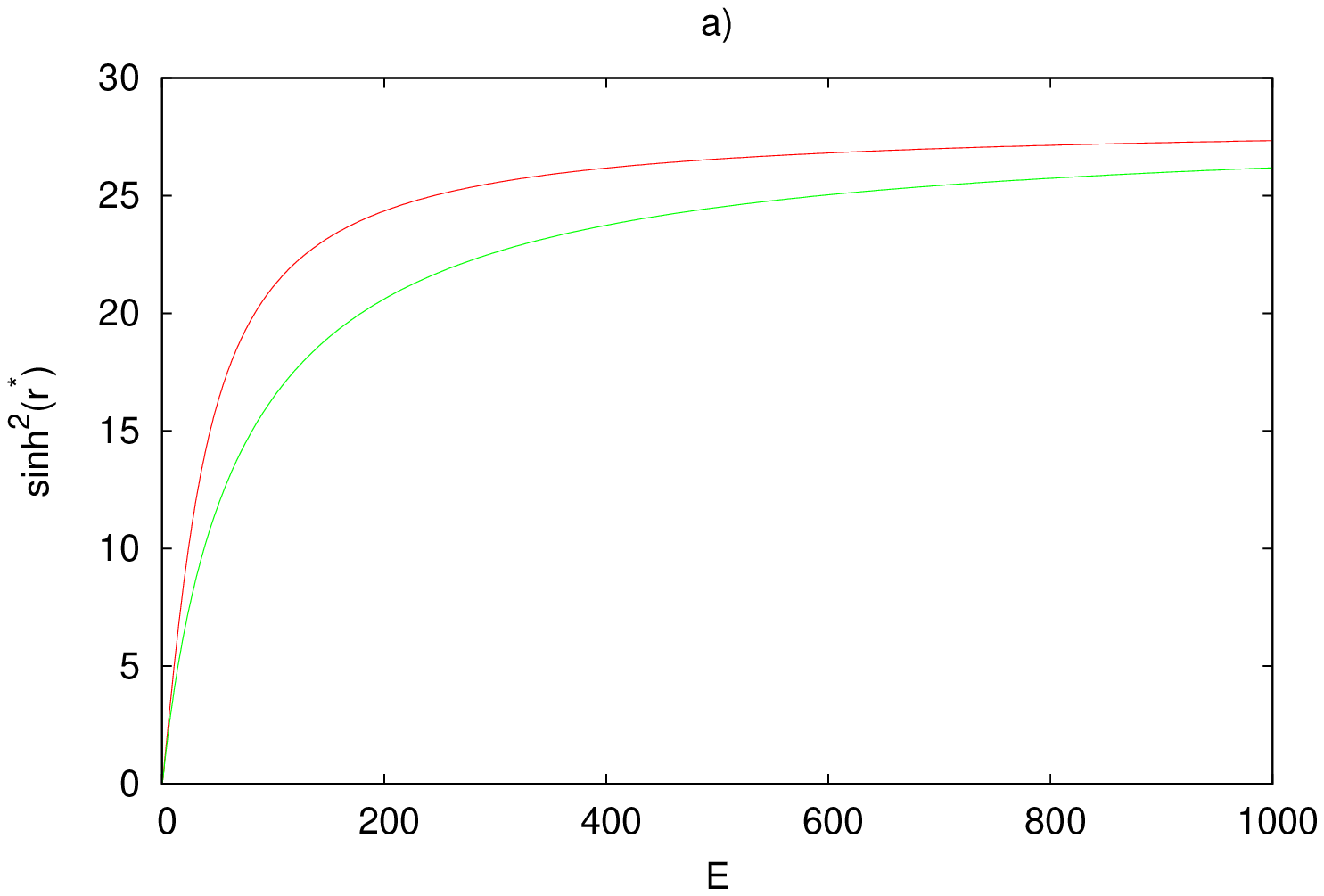}
  \includegraphics[width=1\columnwidth]{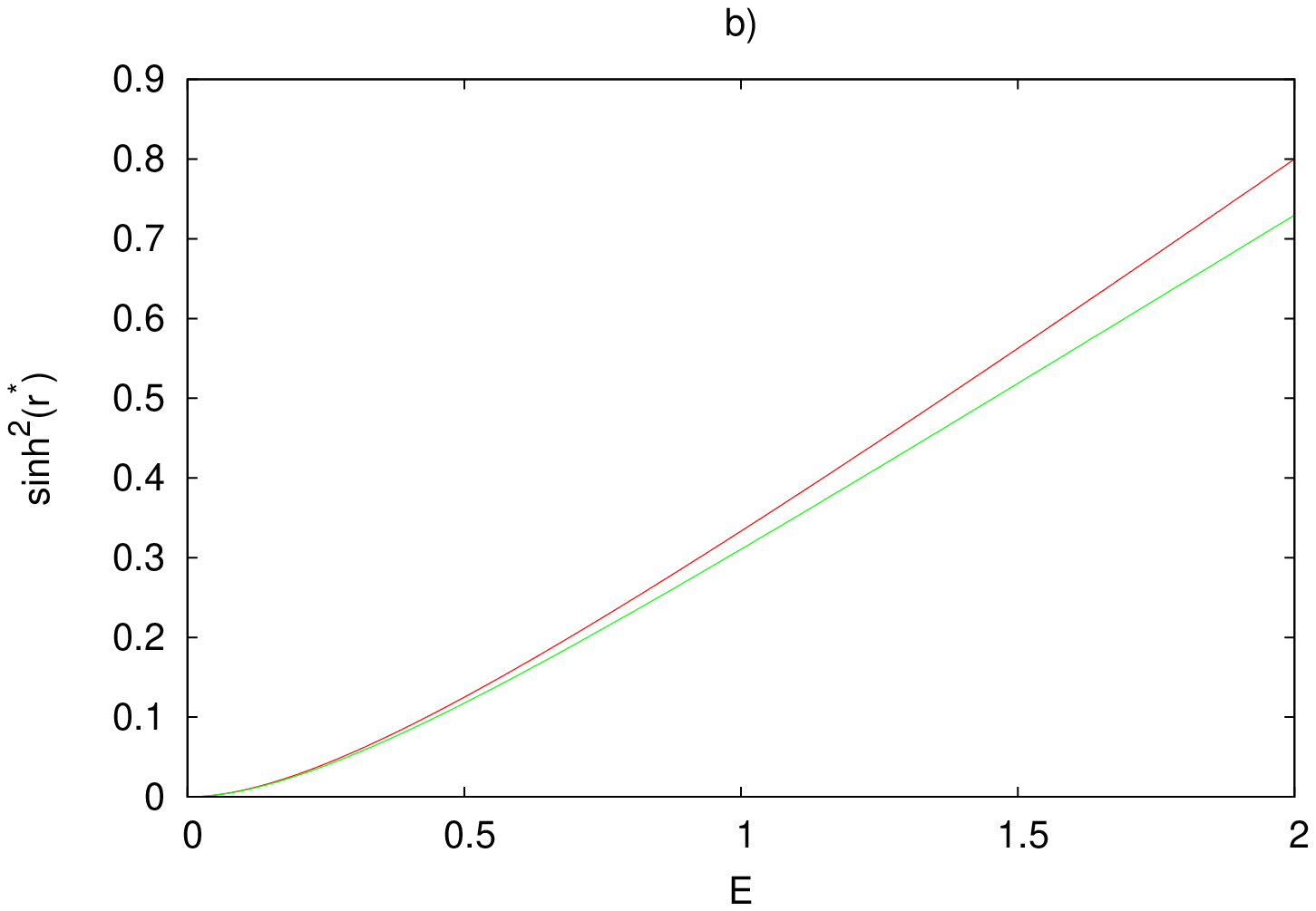}
  \caption{(Color online) Optimal value $\sinh^2(r^*)$ of squeezing
    energy for Gaussian quantum reading of phase shifters $\map{I}$
    and $\map{P}^{\delta}$ as given by Prop.~\ref{thm:gaussopt} versus
    the energy threshold $E$, for $\delta=\pi/180$ [Fig. a)] and
    $\delta=\pi$ [Fig. b)] and for detection efficiency $\eta=1$
    (upper red line in both figures) and $\eta=0.9$ (lower green line
    in both figures). Noticeably, just a tiny fraction of the total
    energy is used for squeezing in optimal Gaussian quantum reading,
    thus keeping the theoretical proposal close to the regime of
    experimental feasibility. Indeed, the generation of squeezed
    coherent states with squeezing parameter $r \sim 1.5$, namely
    squeezing energy $\sinh^2(r) \sim 4.5$, is
    reported~\cite{ESBHVMMS10,MAESVS11}.}
  \label{fig:ropt}
\end{figure}
Remarkably, even for very low values of the probability of error
$P_e$, the corresponding $r^*$ is comparable with experimentally
attainable values of the squeezing parameter, in particular in the
regime $\delta \sim \pi$. For example, when $\delta = \pi$, setting
the energy $E = 4$ and the efficiency of the homodyne measurement
$\eta = 0.9$ (an arguably fairly conservative assumption) one obtains
a probability of error as low as $P_e \sim 6.5 \cdot 10^{-9}$ with
optimal squeezing parameter given by $r^* \sim 1.0$. In
Refs.~\cite{ESBHVMMS10,MAESVS11}, the generation of squeezed coherent
states with squeezing parameter up to $r_{\textrm{dB}} \sim 12.7 dB$
is reported - namely such that $r = r_{\textrm{dB}} / (20 \log_{10}e)
\sim 1.5$.

The optimal tradeoff between energy and probability of error in
Gaussian quantum reading with PSK encoding as given by
Prop.~\ref{thm:gaussopt} is plotted in Fig.~\ref{fig:popt} for
different values of $\delta$ and $\eta$.
\begin{figure}[htb]
  \includegraphics[width=1\columnwidth]{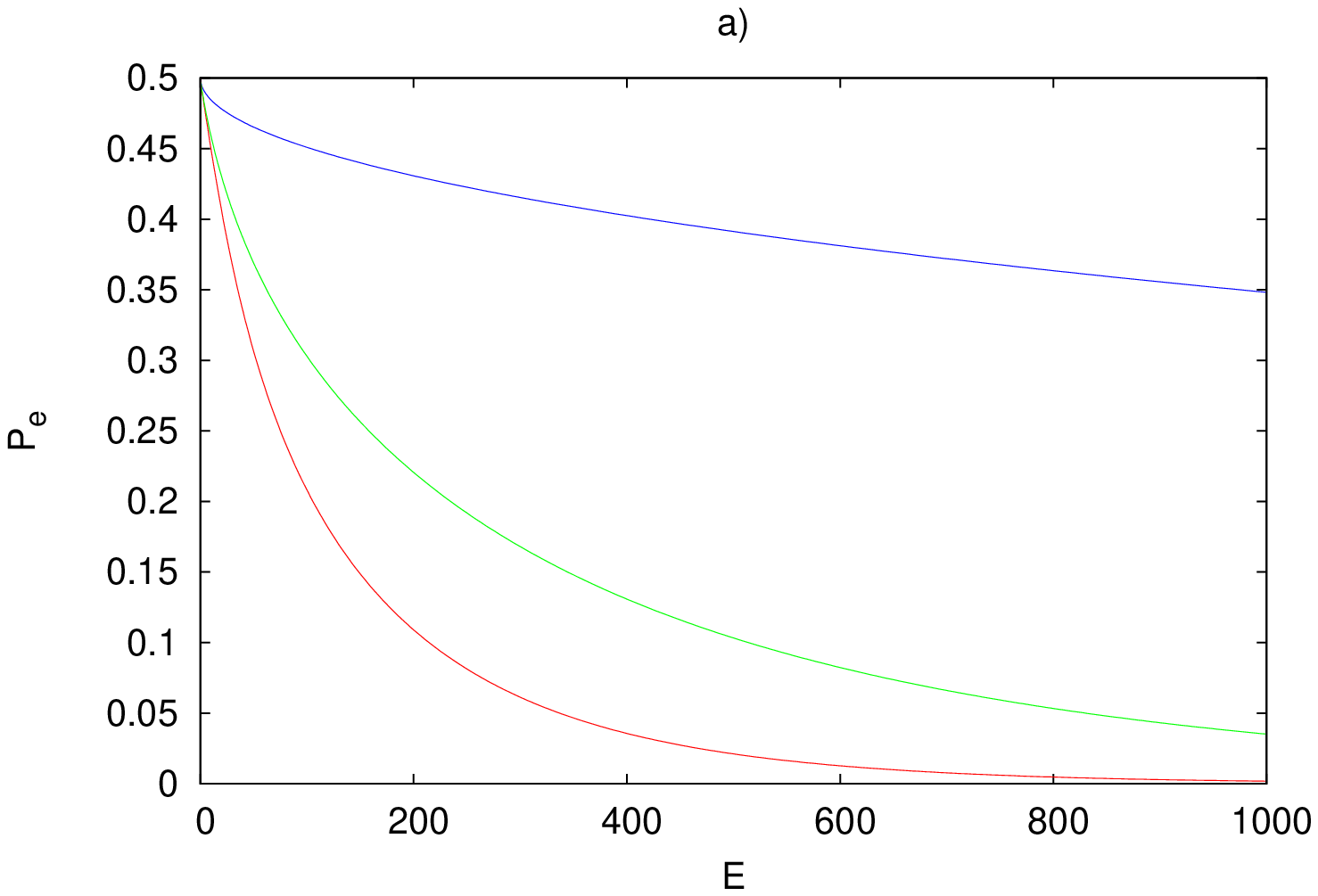}
  \includegraphics[width=1\columnwidth]{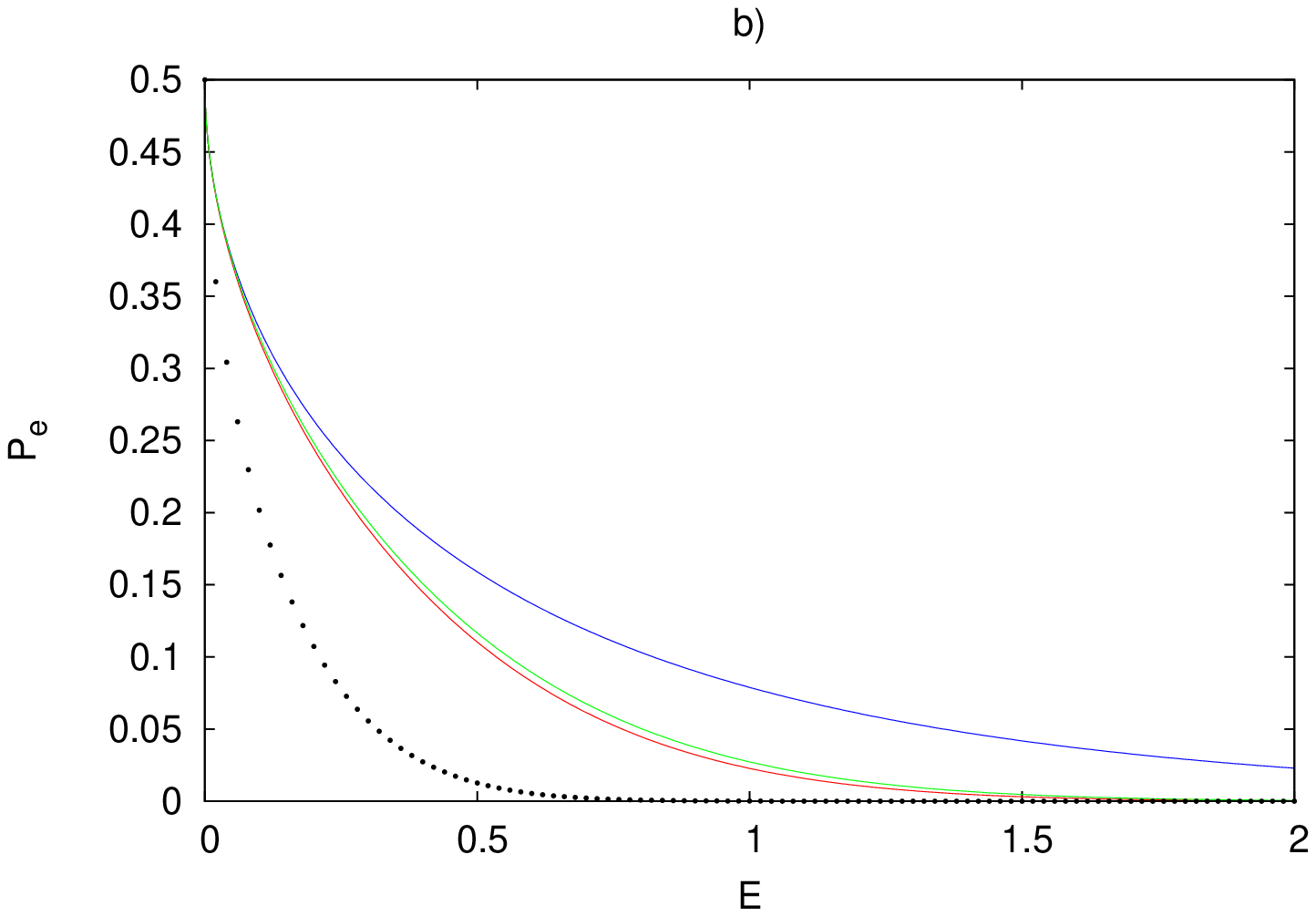}
  \caption{(Color online) Tradeoff between energy $E$ and probability
    of error $P_e$ in the Gaussian quantum reading of phase shifters
    $\map{I}$ and $\map{P}^{\delta}$, for $\delta=\pi/180$ [Fig. a)]
    and $\delta=\pi$ [Fig. b)], and for detection efficiency $\eta =
    1$ (lower red line in both figures) and $\eta = 0.9$ (middle green
    line in both figures). Notice that in Fig. b) red and green lines
    almost coincide. The Standard Quantum Limit is plotted for
    comparison (upper blue line in both figures). For the case $\delta
    = \pi$ and $\eta = 1$, the optimal tradeoff for hybrid quantum
    reading with Gaussian probe and arbitrary measurement (see
    Ref.~\cite{NYGSP12}) given by $P_e = \frac12 (1 -
    \sqrt{1-e^{-4E(E+1)}})$ is depicted (black dots) in Fig. b). The
    Figure suggests that the performance of the optimal Gaussian
    strategy is close to that of the hybrid strategy in the relevant
    regime of low probability of error.}
  \label{fig:popt}
\end{figure}
As the Figure clearly shows, the optimal Gaussian strategy for quantum
reading largely outperforms the Standard Quantum Limit, allowing to
dramatically reduce the probability of error $P_e$ for fixed energy
$E$, even in the presence of a realistic amount of loss and with
realistic limitations in the squeezing parameter. For example, when
$\delta = \pi$, setting $E = 4$ and $\eta = 0.9$ the optimal Gaussian
strategy leads to a probability of error as low as $P_e \sim 6.5 \cdot
10^{-9}$, while the Standard Quantum Limit gives $P_e \sim 2.6 \cdot
10^{-3}$.

\section{Conclusion}
\label{sect:concl}

In this work we addressed the problem of quantum reading with Gaussian
states and homodyne measurements. We showed that when no information
is encoded in the amplitude of the signal, quantum reading in the
presence of loss can be recasted to the discrimination of unitary
devices with low energy and high accuracy. We provided the optimal
Gaussian strategy for quantum reading with PSK encoding for any value
of the phase difference between the two signals encoding logical $0$
and $1$ and we showed that it dramatically outperforms the Standard
Quantum Limit even in the presence of loss and under realistic
assumptions on the practically feasible squeezing parameters. The
optimal Gaussian strategy, consisting in probing a phase shifter with
a properly tuned squeezed coherent state and performing an homodyne
measurement on the output state, is suitable for experimental
implementation with current quantum optical technology and represents
a proof of principle for an highly efficient and reliable
quantum-enhanced optical reader.

A natural generalization of the problem is to allow for different
detectors - e.g. not only homodyne, but also heterodyne or double
homodyne - and to investigate whether they can further improve the
performance of Gaussian quantum reading.

Moreover, in Ref.~\cite{BDD11} it was proven that the Standard Quantum
Limit for quantum reading is attainable without an ancillary space -
basically, because linear optical devices can not create entanglement
when their inputs are coherent states. Then, for the purpose of this
work - namely, proving that Gaussian quantum reading can outperform
the Standard Quantum Limit - it was sufficient to consider the setup
given by Eq.~\eqref{eq:setup6}.

Nevertheless, since linear optical devices can create entanglement
when their inputs are squeezed coherent states~\cite{PS94}, a further
generalization of the problem is that of entangled assisted Gaussian
quantum reading, where the setup in Eq.~\eqref{eq:setup6} is replaced
by
\begin{align*}
  \begin{aligned}
    \Qcircuit @C=\qcwidth @R=\qcheight {
      \prepareC{\ket{\alpha_0,\xi_0}} & \multigate{1}{\map{V}} &
      \ustick{\hilb{H}} \qw & \gate{\map{P}^{\delta_i}} &
      \multigate{1}{\map{W}} & \multigate{1}{\map{E}_{\bar\eta}} &
      \measureD{\Pi_0} & \cmultigate{1}{\map{J}} & \cw
      \\ \prepareC{\ket{\alpha_1,\xi_1}} & \ghost{\map{V}} &
      \ustick{\hilb{K}} \qw & \qw & \ghost{\map{W}} &
      \ghost{\map{E}_{\bar\eta}} & \measureD{\Pi_1} & \cghost{\map{J}}
      \gategroup{1}{1}{2}{2}{2mm}{--} \gategroup{1}{5}{2}{8}{2mm}{--}}
  \end{aligned},
\end{align*}
and the optimization is performed over squeezed coherent states
$\ket{\alpha_i,\xi_i}$, entangling devices $\map{V}$ and $\map{W}$,
homodyne measurements $\Pi_i$, and classical postprocessing $\map{J}$.
To understand whether entangled assisted Gaussian quantum reading can
further improve the performance of Gaussian quantum reading, one could
make use of the numerical techniques developed in Ref.~\cite{BDD11}.

\section*{Acknowledgments}

The author is grateful to Alessandro Bisio, Francesco Buscemi, Giacomo
Mauro D'Ariano, Lorenzo Maccone, Masanao Ozawa, and Massimiliano
F. Sacchi for very useful discussions, comments, and suggestions. This
work was supported by JSPS (Japan Society for the Promotion of
Science) Grant-in-Aid for JSPS Fellows No. 24-0219.

\end{document}